\documentclass[11pt,a4paper]{article}

\usepackage[utf8]{inputenc}
\usepackage{amsmath}
\usepackage{amsfonts}
\usepackage{amssymb}
\usepackage{amsthm}
\usepackage{geometry}
\usepackage{mathtools}

\newtheorem{theorem}{Theorem}[section]

\newtheorem{claim}[theorem]{Claim}
\newtheorem{proposition}[theorem]{Proposition}

\theoremstyle{definition}
\newtheorem{definition}[theorem]{Definition}

\usepackage[colorlinks=true, linkcolor=blue, citecolor=blue, urlcolor=blue]{hyperref}

\usepackage[round]{natbib}

\title{\textbf{The Degree of Strategy-Proofness for Risk-Averse Committee Selection}}
\author{
	\textbf{Dael Sinay} \\ 
	\small Dept. of Mathematics and Computer Science \\
	\small The Open University of Israel \\
	\small \texttt{dael.algo@gmail.com}
	\and
	\textbf{Rica Gonen} \\
	\small Dept. of Mathematics and Computer Science  \\
	\small The Open University of Israel \\
	\small \texttt{ricagonen@gmail.com}
}
\date{\small \today}

\begin{document}
	
	\maketitle
	
\begin{abstract}
		The classic notion of strategyproofness implicitly assumes that a manipulating agent either possesses complete knowledge of what all other agents are going to report, or is willing to take the risk and act as if they know these reports. To capture the profound uncertainty of real-world voters, recent work introduced \emph{risk-avoiding truthfulness (RAT)} and the \emph{RAT-degree}, which quantifies the exact number of known reports required for a manipulation to be strictly safe. While the RAT-degree has been analyzed in settings such as single-winner elections, its implications for multi-winner voting remain unexplored. In this paper, we bridge this gap by extending the RAT-degree framework to approval-based committee (ABC) selection, focusing initially on the prominent Proportional Approval Voting (PAV) rule. We establish tight bounds on its susceptibility to safe subset manipulations, proving that PAV is immune to superset risk-avoiding manipulations given knowledge of at most $f = \lfloor \frac{n}{k+1} \rfloor - 1$ voters, but vulnerable when $f = \lceil \frac{n}{k} \rceil$. Recognizing that this degree of immunity may be insufficient in practice, we explore how to enhance strategic robustness by relaxing the proportionality requirement. We introduce a novel parameterized generalization of PAV, the family of $d$-RPAV rules, which encapsulates this inherent trade-off: a higher parameter $d$ yields stronger truthfulness and strategic robustness at the expense of weaker, relaxed proportionality guarantees. Specifically, we establish a generalized tight lower bound, proving that $d$-RPAV is completely immune to safe manipulation given knowledge of at most $f = \lfloor \frac{dn}{k+2d-1} \rfloor - 1$ voters.
\end{abstract}	
\newpage
	
\section{Introduction}
\label{sec:intro}

Multi-winner elections are crucial for selecting representative bodies, shortlists, and diverse committees. A rich variety of voting rules has been proposed for this purpose, differing in how they aggregate individual preferences and the axiomatic properties they aim to satisfy. A prominent and highly natural class among these are approval-based committee (ABC) rules, where voters simply approve any subset of candidates they find acceptable, rather than, for example, ranking them. In this paper, we focus our attention on this natural framework.

A fundamental challenge in multi-winner elections is balancing the desire for proportional representation with robustness against strategic voting. This tension is particularly evident in the approval-based setting. For instance, Approval Voting (AV) is fully strategyproof but fails to provide proportional representation for cohesive groups of voters. Conversely, Proportional Approval Voting (PAV) offers excellent proportionality guarantees but is inherently vulnerable to strategic manipulation. This trade-off is not merely a flaw of specific rules; Peters  established a formal impossibility theorem demonstrating that no approval-based committee rule can simultaneously satisfy even minimal forms of proportionality and strategyproofness \citep{peters2021}.

In the context of proportional ABC rules, a particularly natural form of strategic behavior is \emph{subset manipulation} (also known as dropping candidates or free-riding), where a voter submits a ballot that is a strict subset of their truthful preferences in an attempt to secure a more favorable committee. The aforementioned impossibility theorem, much like the classic definition of strategyproofness, implicitly assumes that a manipulating agent either possesses complete knowledge of what all other agents are going to report, or is willing to take the risk and act as if they know these reports. However, in real-world elections, voters typically operate under profound uncertainty. Without knowledge of the others' reports, most manipulative maneuvers are inherently \emph{risky} they might strictly decrease the manipulator's utility for certain combinations of reports by the other agents. Consequently, a realistic, risk-averse agent will only deviate from their truthful preference if the manipulation is \emph{safe}: it never yields a strictly worse outcome regardless of the unknown votes, while yielding a strictly better outcome in at least one scenario.

To capture this realistic behavior, recent literature introduced the concept of \emph{risk-avoiding truthfulness (RAT)} \citep{bu2023existence}, which requires a mechanism to be immune only to safe manipulations. Building on this, the \emph{RAT-degree} of a mechanism was introduced \citep{hartman2025} to quantify the exact number of other agents' reports a manipulator must know in order to safely manipulate. This notion elegantly interpolates between classic truthfulness (which assumes knowledge of all $n$ voters) and basic RAT (which assumes no knowledge), where a higher RAT-degree indicates a mechanism that is harder to manipulate in practice. While the RAT-degree has been analyzed across various social choice settings—including auctions, indivisible goods allocation, cake-cutting, and two-sided matching—its implications for proportional multi-winner elections remain unexplored. In this paper, we bridge this gap by applying the RAT-degree framework to committee selection, with a specific focus on ABC rules.

	As our starting point, we analyze the Proportional Approval Voting (PAV) rule, a prominent ABC rule renowned for its proportionality but known to be manipulable. We investigate its vulnerability to safe subset manipulations by introducing the parameterized framework of \emph{Superset Risk-Avoiding Strategy-Proofness under Dropping Candidates given $f$}. We establish tight bounds on its RAT-degree, demonstrating that PAV is vulnerable to such manipulations when the manipulator knows the exact ballots of $f = \lceil \frac{n}{k} \rceil$ other voters, but remains fully immune given knowledge of at most $f = \lfloor \frac{n}{k+1} \rfloor - 1$ voters.
	
	Recognizing that this degree of immunity still leaves a significant gap from full truthfulness (a RAT-degree of $n$), we explore the theoretical boundaries of committee rules by asking: what happens if we relax the rigorous proportionality requirement? Consider an intuitive notion of \emph{half proportionality}, where a candidate is guaranteed election only if they are unanimously supported by a coalition twice as large—i.e., $\lceil 2n/k \rceil$ voters. One could further imagine third proportionality requiring $\lceil 3n/k \rceil$ voters, and so on. This conceptual relaxation smoothly leads us to a generalized, parameterized approach to proportionality and diminishing marginal returns.
	
	Motivated by this, our main contribution is the introduction and analysis of a novel parameterized generalization of PAV: the family of $d$-RPAV rules (where PAV is the special case of $d=1$). By tuning the parameter $d$, we control the steepness of the diminishing returns, with larger $d$ flattening the returns toward those of unweighted Approval Voting (AV). Our results reveal a precise mathematical trade-off: a higher parameter $d$ yields stronger strategic robustness—bringing the rule closer to full strategy-proofness—at the expense of weaker proportionality guarantees. Specifically, we establish a generalized tight lower bound, proving that the $d$-RPAV rule is completely immune to safe manipulations—meaning it is superset risk-avoiding strategy-proof—given knowledge of at most $f = \lfloor \frac{dn}{k+2d-1} \rfloor - 1$ voters. This result provides a flexible foundation for tuning the RAT-degree of committee rules and elegantly converges to the specific bound for standard PAV when $d=1$.
    
\section{Preliminaries}
\label{sec:prelim}

In this section, we formally introduce the framework of approval-based committee selection and define the core concepts used throughout the paper.

\subsection{Problem Definition}

Let $C$ be a finite set of $m$ candidates, and let $N = \{1, \dots, n\}$ be a finite set of $n$ voters. 

An \emph{approval ballot} is a proper subset $P_i$ of $C$ such that $\emptyset \neq P_i \subsetneq C$. We denote the set of all possible approval ballots by $\mathcal{B}$. An \emph{approval profile} is a function that assigns to each voter an approval ballot. For brevity, we write a profile as a tuple, such that $P = (P_1, \dots, P_n)$.

Let $k$ be a fixed integer satisfying $1 \le k < m$. A \emph{committee} is a subset of $C$ with cardinality $k$. We denote the set of all possible committees by $C_k$. An (approval-based) \emph{committee rule} is a function $\mathcal{R}: \mathcal{B}^n \to C_k$, which assigns a unique winning committee to each approval profile.

\subsection{Notations}

Given a voter $i \in N$, we denote their approval ballot by $P_i$, and by $P_{-i}$ the profile of all voters except voter $i$. When applying a committee rule to the complete profile, we write it as $\mathcal{R}(P_i, P_{-i})$.

Given a voter $i \in N$ and a subset of voters $F \subseteq N \setminus \{i\}$, we denote by $\overline{F}$ the set of remaining voters, defined as $\overline{F} \coloneqq N \setminus ( \{i\} \cup F$ ). Similarly, we use the notation $\mathcal{R}(P_i, P_F, P_{\overline{F}})$ to represent the application of the rule to the preferences of all players under this partition.

\subsection{Definitions}
\label{subsec:definitions}

\begin{definition}[Cardinality Strategy-Proofness]
	A committee rule $\mathcal{R}$ is said to be \emph{vulnerable to cardinality strategy} if there exists a voter $i \in N$, an approval ballot $P_i \subset C$, and an alternative ballot $P_i \neq A_i \subset C$ such that for every $P_{-i}$, it holds that:
	\[
	|\mathcal{R}(A_i, P_{-i}) \cap P_i| > |\mathcal{R}(P_i, P_{-i}) \cap P_i|
	\]
	A committee rule is \emph{cardinality strategy-proof} if it is not vulnerable to cardinality strategy.
\end{definition}

\begin{definition}[Cardinality Strategy-Proofness under Dropping Candidates]
	A committee rule $\mathcal{R}$ is said to be \emph{vulnerable to cardinality strategy under dropping candidates} if there exists a voter $i \in N$, an approval ballot $P_i \subset C$, and an alternative ballot $A_i \subset P_i$ such that for every $P_{-i}$, it holds that:
	\[
	|\mathcal{R}(A_i, P_{-i}) \cap P_i| > |\mathcal{R}(P_i, P_{-i}) \cap P_i|
	\]
	A committee rule is \emph{cardinality strategy-proof under dropping candidates} if it is not vulnerable to cardinality strategy under dropping candidates.
\end{definition}

\begin{definition}[Superset Strategy-Proofness]
	A committee rule $\mathcal{R}$ is said to be \emph{vulnerable to superset strategy} if there exists a voter $i \in N$, an approval ballot $P_i \subset C$, and an alternative ballot $P_i \neq A_i \subset C$ such that for every $P_{-i}$, it holds that:
	\[
	\mathcal{R}(A_i, P_{-i}) \cap P_i \supsetneq \mathcal{R}(P_i, P_{-i}) \cap P_i
	\]
	A committee rule is \emph{superset strategy-proof} if it is not vulnerable to superset strategy.
\end{definition}

\begin{definition}[Superset Strategy-Proofness under Dropping Candidates]
	A committee rule $\mathcal{R}$ is said to be \emph{vulnerable to superset strategy under dropping candidates} if there exists a voter $i \in N$, an approval ballot $P_i \subset C$, and an alternative ballot $A_i \subset P_i$ such that for every $P_{-i}$, it holds that:
	\[
	\mathcal{R}(A_i, P_{-i}) \cap P_i \supsetneq \mathcal{R}(P_i, P_{-i}) \cap P_i
	\]
	A committee rule is \emph{superset strategy-proof under dropping candidates} if it is not vulnerable to superset strategy under dropping candidates.
\end{definition}

\begin{definition}[Superset Risk-Avoiding Strategy-Proofness under Dropping Candidates]
	A committee rule $\mathcal{R}$ is said to be \emph{vulnerable to superset risk-avoiding strategy under dropping candidates} if there exists a voter $i \in N$, an approval ballot $P_i \subset C$, and an alternative ballot $A_i \subset P_i$ such that:
	\begin{enumerate}
		\item For every $P_{-i}$, it holds that:
		\[
		 \mathcal{R}(A_i, P_{-i}) \cap P_i \supseteq
		 \mathcal{R}(P_i, P_{-i}) \cap P_i 
		\]
		\item There exists $P_{-i}$ such that:
		\[
		\mathcal{R}(A_i, P_{-i}) \cap P_i \supsetneq \mathcal{R}(P_i, P_{-i}) \cap P_i
		\]
	\end{enumerate}
	A committee rule is \emph{superset risk-avoiding strategy-proof under dropping candidates} if it is not vulnerable to superset risk-avoiding strategy under dropping candidates.
\end{definition}

The following definitions are for $0 \le f \le n-1$:

\begin{definition}[Superset Risk-Avoiding Strategy-Proofness under Dropping Candidates given $f$]
\label{def:rasp_dropping_f}
	A committee rule $\mathcal{R}$ is said to be \emph{vulnerable to superset risk-avoiding strategy under dropping candidates given $f$} if there exists a voter $i \in N$ and a subset of voters $F \subseteq N \setminus \{i\}$ with $|F| = f$, such that for this voter there exists an approval ballot $P_i \subset C$ and an alternative ballot $A_i \subset P_i$ satisfying:
	\begin{enumerate}
		\item For every $P_{\overline{F}}$, it holds that:
		\[
		\mathcal{R}(A_i, P_F, P_{\overline{F}}) \cap P_i \supseteq \mathcal{R}(P_i, P_F, P_{\overline{F}}) \cap P_i
		\]
		\item There exists $P_{\overline{F}}$ such that:
		\[
		\mathcal{R}(A_i, P_F, P_{\overline{F}}) \cap P_i \supsetneq \mathcal{R}(P_i, P_F, P_{\overline{F}}) \cap P_i
		\]
	\end{enumerate}
	A committee rule is \emph{superset risk-avoiding strategy-proof under dropping candidates given $f$} if it is not vulnerable to superset risk-avoiding strategy under dropping candidates given $f$.
\end{definition}

\begin{proposition}
It is easy to see that if a committee rule is vulnerable to superset risk-avoiding strategy under dropping candidates given $f$ for some $0 \le f \le n-2$, then it is also vulnerable to superset risk-avoiding strategy under dropping candidates given $f+1$.
\end{proposition}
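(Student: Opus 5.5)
Starting from a witness of vulnerability given $f$, I will build a witness of vulnerability given $f+1$. Fix a voter $i$, a set $F \subseteq N\setminus\{i\}$ with $|F|=f$, known reports $P_F$, and ballots $P_i$ and $A_i \subset P_i$ satisfying conditions 1 and 2 of Definition~\ref{def:rasp_dropping_f}. Condition 2 is witnessed by some profile $P^*_{\overline{F}}$ of the remaining voters. Since $f \le n-2$, we have $|\overline{F}| = n-1-f \ge 1$, so I can pick a voter $j \in \overline{F}$. I then set $F' \coloneqq F \cup \{j\}$, so $|F'| = f+1$. For the known reports I take $P_{F'} \coloneqq (P_F, P^*_j)$, where $P^*_j$ is $j$'s ballot in the witnessing profile. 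I keep the same $i$, $P_i$ and $A_i$.

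Next I verify the two conditions for $F'$. The remaining voters are now $\overline{F'} = \overline{F}\setminus\{j\}$. For condition 1, take any $P_{\overline{F'}}$. Then $(P^*_j, P_{\overline{F'}})$ is a particular profile $P_{\overline{F}}$ of the voters in $\overline{F}$. Condition 1 for $F$ holds for every such profile, so it gives
\[
\mathcal{R}(A_i, P_{F'}, P_{\overline{F'}}) \cap P_i \supseteq \mathcal{R}(P_i, P_{F'}, P_{\overline{F'}}) \cap P_i .
\]
For condition 2, I choose $P_{\overline{F'}}$ to be the restriction of $P^*_{\overline{F}}$ to $\overline{F'}$. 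The full profile is then exactly the witnessing profile from the $f$ case, so the strict containment carries over unchanged.

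There is no real obstacle here; the argument is only bookkeeping. The one point needing care is that the profile of $\overline{F}$ splits as $(P_j, P_{\overline{F'}})$, and that fixing $P_j$ shrinks the quantifier range in condition 1 while keeping the witness in condition 2 available. The hypothesis $f\le n-2$ is used exactly once, to guarantee $\overline{F}\neq\emptyset$ so that $j$ exists.
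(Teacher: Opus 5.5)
Your proof is correct. You fix the added voter $j$'s ballot to the one it casts in the condition-2 witness, so that witness stays available, and condition 1 for $F \cup \{j\}$ is then just a restriction of the universally quantified condition 1 for $F$. The paper gives no argument beyond ``it is easy to see,'' and this one-voter transfer, with the known reports read as existentially chosen as in the paper's own use of the definition, is exactly the routine bookkeeping it alludes to.
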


\begin{definition}[Degree of Superset Risk-Avoiding Strategy-Proofness under Dropping Candidates]
	The \emph{degree of superset risk-avoiding strategy-proofness under dropping candidates} (\emph{Superset RASP degree under Dropping Candidates}) is defined as the highest integer $f$ for which the rule is superset risk-avoiding strategy-proof under dropping candidates given $f$. Furthermore, if the rule satisfies superset strategy-proofness under dropping candidates, its degree is defined to be $n$.
\end{definition}

\begin{definition}[Proportionality]
	A committee rule $\mathcal{R}$ is defined as \emph{proportional} if, for any profile $P$ in which there are $\left\lceil \frac{n}{k} \right\rceil$ voters whose approval ballot is exactly $\{c\}$, it holds that $c \in \mathcal{R}(P)$.
\end{definition}

\begin{definition}[Proportional Approval Voting (PAV)]
	The PAV rule is the committee rule that selects the committee $W \in \mathcal{C}_k$ that maximizes the PAV score:
	\[
	score_{PAV}(W) = \sum_{i \in N} H(\left| P_i \cap W\right| )
	\]
	where $H(x)$ is defined as the sum of the harmonic series up to $x$, i.e., $H(x) = \sum_{j=1}^{x} \frac{1}{j}$. 	
	Ties between committees with the same score are broken by a lexicographic order.
\end{definition}

According to previous literature, the PAV rule satisfies much stronger proportionality properties than the one defined here, and it certainly satisfies our definition of proportionality.

\begin{definition}[$\frac{1}{d}$-Proportionality]
	A committee rule $\mathcal{R}$ is defined as \emph{$\frac{1}{d}$-proportional} if, for any profile $P$ in which there are $\left\lceil \frac{dn}{k} \right\rceil$ voters whose approval ballot is exactly $\{c\}$, it holds that $c \in \mathcal{R}(P)$.
\end{definition}

\begin{definition}[$d$-Regressive Proportional Approval Voting ($d$-RPAV)]
	The $d$-RPAV rule is the committee rule that selects the committee $W \in \mathcal{C}_k$ that maximizes the $d$-RPAV score:
	\[
	score_{d-RPAV}(W) = \sum_{i \in N} H_d(\left| P_i \cap W\right| )
	\]
	where $H_d(x)$ is defined as the sum of the following series up to $x$, $H_d(x) = \sum_{j=1}^{x} \frac{d}{j+d-1}$. 	
	Ties between committees with the same score are broken by a lexicographic order.
\end{definition}

\section{Main Results}

\begin{claim}
	The PAV rule is vulnerable to superset risk-avoiding strategy under dropping candidates given $f = \left\lceil \frac{n}{k} \right\rceil$.
\end{claim}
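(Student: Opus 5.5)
The plan is to build an explicit witness for Definition~\ref{def:rasp_dropping_f}. The idea is to use the $f=\lceil n/k\rceil$ known voters to \emph{lock in} one of the manipulator's approved candidates by proportionality, so that approving it is wasted and costly. Concretely, fix a voter $i$, a set $F$ with $|F|=\lceil n/k\rceil$, and two candidates $c,a$. Let every voter in $F$ report exactly $\{c\}$. Let $P_i=\{c,a\}$ and $A_i=\{a\}\subset P_i$. Since PAV is proportional (as noted after its definition), $c$ belongs to the PAV winner both in $(P_i,P_F,P_{\overline F})$ and in $(A_i,P_F,P_{\overline F})$, for every $P_{\overline F}$. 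Note that the $F$-voters alone already number $\lceil n/k\rceil$, so voter $i$'s ballot is irrelevant to this.

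\textbf{Condition 1 (safety).} Because the winner always contains $c$, the winner is the lexicographically-first maximiser of the PAV score restricted to committees $W\ni c$. I will compare the two profiles on that restricted family. Replacing $P_i$ by $A_i$ changes only voter $i$'s term. For $W\ni c,a$ it goes from $H(2)=3/2$ to $H(1)=1$, a change of $-1/2$. For $W\ni c$ with $a\notin W$ it goes from $H(1)=1$ to $H(0)=0$, a change of $-1$. So all $a$-containing committees shift by the same amount, and all $a$-free ones drop by strictly more.

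Suppose $a\in W^*=\mathcal{R}(P_i,\cdot)$. Then $W^*$ remains maximal among $a$-containing committees, with the same relative lexicographic ties. Every $a$-free committee was weakly below $W^*$ and is now strictly below it. Hence the new winner still contains $a$ (indeed it is $W^*$). Since $c$ is in both winners, $\mathcal{R}(A_i,\cdot)\cap P_i\supseteq\mathcal{R}(P_i,\cdot)\cap P_i$ always holds.

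\textbf{Condition 2 (strict gain).} I need one $P_{\overline F}$ where $a$ loses under $P_i$ but wins under $A_i$, i.e.\ $a$ trails its best competitor by less than $1/2$ (or ties unfavourably). The plan is to use a competitor $b$ that comes after $a$ in the lexicographic order. One voter in $\overline F$ reports $\{b\}$. Then, with $i$ truthful, $a$'s marginal gain is $1/2$ and $b$'s is $1$, so $b$ beats $a$ for the last seat. With $A_i$, both marginals are $1$ and the tie is broken in favour of $a$.

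The remaining $k-2$ seats must be filled by filler candidates $e_1,\dots,e_{k-2}$ (placed early in the lexicographic order, or strongly supported). The remaining voters of $\overline F$ are assigned to fillers or to $c$ so that exactly one seat is contested between $a$ and $b$, and neither $a$ nor $b$ can displace a filler. I will verify this by directly comparing the scores of the few relevant committees.

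\textbf{Main obstacle.} I expect the main difficulty to be this last construction: choosing $P_{\overline F}$ uniformly in $n$ and $k$ so that precisely one seat is decided between $a$ and $b$. I would first try giving each filler $e_j$ enough single-candidate supporters from $\overline F$, and dumping all surplus voters onto $\{c\}$ (harmless, since $c$ is forced anyway). I would then handle degenerate small cases ($k\le 2$, or too few voters in $\overline F$) separately, or note the mild size assumptions on $n$ and $m$ that the construction requires.
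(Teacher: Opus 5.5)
Your proof is correct. It uses the same witness as the paper: $F$ votes $\{c\}$, $P_i=\{c,a\}$, $A_i=\{a\}$, and proportionality locks $c$ into both outcomes. For Condition~1 you make the paper's idea precise. The paper argues informally that once $c_2$'s loss grows from $\tfrac12$ to $1$, no unelected candidate can ``prevail over'' it. You instead restrict to committees containing $c$ and observe that voter $i$'s term drops by exactly $\tfrac12$ on those containing $a$ and by $1$ on the others. That handles every competing committee, and ties, at once.

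The genuine difference is Condition~2. The paper says the general case is easy and gives an explicit profile only for $k=3$: ballots $\{c_1,c_2,c_4\}$ and $\{c_1,c_3,c_4\}$ in a $(\tfrac{f'}{2}-1,\tfrac{f'}{2}+1)$ split. This yields strict gaps, $\tfrac16$ in favour of $W_{-2}$ when voter $i$ is truthful and $\tfrac13$ in favour of $W_{-3}$ under $A_i$, so it does not depend on tie-breaking, but it covers a single value of $k$. Your singleton-plus-fillers construction covers every $k\ge 2$. It relies on the tie-breaking order, which is legitimate because you choose which candidates play $a$ and $b$.

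Your ``main obstacle'' is easier than you expect. When every voter of $\overline F$ casts a singleton ballot, the PAV score on committees containing $c$ is modular. It equals a constant plus $\sum_{x\in W\setminus\{c\}} w(x)$, where $w(x)$ counts the singleton supporters of $x$, and $a$ additionally gets weight $\tfrac12$ (truthful) or $1$ (manipulated) from voter $i$. The winner is therefore just the $k-1$ heaviest candidates besides $c$, with lexicographic tie-breaking. The following choice works:
\begin{itemize}
\item give $b$ one supporter;
\item give each $e_j$ two supporters, or one if $b$ is lexicographically last among $a,b,e_1,\dots,e_{k-2}$;
\item put all leftover voters on $\{c\}$.
\end{itemize}
This needs $n-1-\lceil n/k\rceil\ge 2k-3$, or $\ge k-1$ in the one-supporter version. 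The paper likewise assumes $n$ is large, so setting aside small cases is no weaker than what the paper does.

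One correction: the fillers must have support, and lexicographic priority alone does not suffice. If the $e_j$ have weight $0$ and $k\ge 3$, then $a$'s truthful weight $\tfrac12$ already beats them. Then $a$ is elected under $P_i$, and there is no strict gain.
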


\begin{proof}
	For a voter $i \in N$, let the approval ballot be $P_i = \{c_1, c_2\}$ and the alternative ballot be $A_i = \{c_2\}$. Clearly, $A_i \subset P_i$. Let $F$ be a set of voters of size $\left\lceil \frac{n}{k} \right\rceil$ such that the approval ballot of every voter in $F$ is exactly $\{c_1\}$.
	
	By the definition of proportionality, for every $P_{\overline{F}}$, it holds that both $c_1 \in \mathcal{R}(P_i, P_F, P_{\overline{F}})$ and $c_1 \in \mathcal{R}(A_i, P_F, P_{\overline{F}})$.
	
Furthermore, if $c_2 \in \mathcal{R}(P_i, P_F, P_{\overline{F}})$, then it can be said that the $score_{PAV}$ obtained for the winning committee containing $c_2$ under the profile $(P_i, P_F, P_{\overline{F}})$ prevailed over the $score_{PAV}$ assigned to any committee that does not include $c_2$. In this voting profile, the absence of $c_2$ from the winning committee would reduce the $score_{PAV}$ of the committee by at least $\frac{1}{2}$ (due to the contribution of voter $i$), and no unelected candidate could have contributed more to the $score_{PAV}$ than $c_2$ did.

Now, under the voting profile $(A_i, P_F, P_{\overline{F}})$, everything remains the same except for $c_2$. However, the absence of $c_2$ would now reduce the $score_{PAV}$ by at least $1$ (due to the contribution of voter $i$, while the contributions of other voters remain the same). Therefore, no candidate who was not elected in $\mathcal{R}(P_i, P_F, P_{\overline{F}})$ can prevail over $c_2$, just as they could not prevail over it in the previous state. Thus, it must be the case that $c_2 \in \mathcal{R}(A_i, P_F, P_{\overline{F}})$. That is, we have shown that if $c_2 \in \mathcal{R}(P_i, P_F, P_{\overline{F}})$, then $c_2 \in \mathcal{R}(A_i, P_F, P_{\overline{F}})$.

Recall that $P_i = \{c_1, c_2\}$. Therefore, if $c_2 \notin \mathcal{R}(P_i, P_F, P_{\overline{F}})$, then $\mathcal{R}(P_i, P_F, P_{\overline{F}}) \cap P_i = \{c_1\}$. As we have shown that $c_1 \in \mathcal{R}(A_i, P_F, P_{\overline{F}})$, it holds that $\mathcal{R}(A_i, P_F, P_{\overline{F}}) \cap P_i \supseteq \mathcal{R}(P_i, P_F, P_{\overline{F}}) \cap P_i$.

On the other hand, if $c_2 \in \mathcal{R}(P_i, P_F, P_{\overline{F}})$, then $\mathcal{R}(P_i, P_F, P_{\overline{F}}) \cap P_i = \{c_1, c_2\}$. As we have shown in this case, $c_2 \in \mathcal{R}(A_i, P_F, P_{\overline{F}})$, and in any case $c_1 \in \mathcal{R}(A_i, P_F, P_{\overline{F}})$. Hence, in this case as well, it holds that $\mathcal{R}(A_i, P_F, P_{\overline{F}}) \cap P_i \supseteq \mathcal{R}(P_i, P_F, P_{\overline{F}}) \cap P_i$.

This concludes the proof of Condition 1 in Definition \ref{def:rasp_dropping_f}.

\bigskip

For Condition 2 of Definition \ref{def:rasp_dropping_f}, it is easy to find a sub-profile $P_{\overline{F}}$ that places $c_2$ exactly on the winning threshold, such that if voter $i$ submits an approval ballot of exactly $\{c_2\}$, $c_2$ is pushed into the winning committee, whereas if he submits a ballot of $\{c_1, c_2\}$ leaves $c_2$ below the threshold to lose.

Nevertheless, we will provide an explicit example of such a sub-profile. For this purpose, let $k=3$. Note that the number of the remaining voters is $|\overline{F}| = n - 1 - \left\lceil \frac{n}{k} \right\rceil$. Let $f' = |\overline{F}|$ denote this size. We assume, without loss of generality, that $f'$ is an even number (if it is not, we can assign one voter in $\overline{F}$ an approval ballot of exactly $\{c_1\}$ and distribute the remaining voters accordingly).

We construct $P_{\overline{F}}$ such that there are $\frac{f'}{2} - 1$ voters whose approval ballot is exactly $\{c_1, c_2, c_4\}$, and $\frac{f'}{2} + 1$ voters whose approval ballot is exactly $\{c_1, c_3, c_4\}$.

We can identify four possible winning committees. Let us denote them as follows: 
$W_{-1} = \{c_2, c_3, c_4\}$, 
$W_{-2} = \{c_1, c_3, c_4\}$, 
$W_{-3} = \{c_1, c_2, c_4\}$, 
and $W_{-4} = \{c_1, c_2, c_3\}$. 

We know that $W_{-1}$ will not be the winning committee due to the proportionality property of the PAV rule. It is also highly unlikely that $W_{-4}$ will be the winner, but we will demonstrate this explicitly.

First, let us examine the $score_{PAV}$ for the various possible committees under the profile $(P_i, P_F, P_{\overline{F}})$. For any general committee $W$, we can decompose its score as follows:

\begin{align*}
	\left. score_{PAV}(W) \right| \, (P_i, P_F, P_{\overline{F}}) &= \sum_{j \in N} H(|P_j \cap W|) \\
	&= H(|P_i \cap W|) + \sum_{j \in F} H(|P_j \cap W|) + \sum_{j \in \overline{F}} H(|P_j \cap W|) \\
	&= H(|P_i \cap W|) + \sum_{x=1}^{f} H(|\{c_1\} \cap W|) + \sum_{x=1}^{\frac{f'}{2}-1} H(|\{c_1, c_2, c_4\} \cap W|) \\
	&+ \sum_{x=1}^{\frac{f'}{2}+1} H(|\{c_1, c_3, c_4\} \cap W|) \\
	&= H(|P_i \cap W|) + f \cdot H(|\{c_1\} \cap W|) + \left(\frac{f'}{2} - 1\right) H(|\{c_1, c_2, c_4\} \cap W|) \\
	&+ \left(\frac{f'}{2} + 1\right) H(|\{c_1, c_3, c_4\} \cap W|)
\end{align*}

Now, since we know that $W \neq W_{-1}$ (due to the proportionality of the PAV rule), it follows that $|\{c_1\} \cap W| = 1$. We will also use the notation of $W_{-2}$ and $W_{-3}$ to denote the sub-profiles themselves, and recall that $P_i = \{c_1, c_2\}$:

\begin{align*}
	\left. score_{PAV}(W) \right| \, (P_i, P_F, P_{\overline{F}}) \\
	&= H(|\{c_1, c_2\} \cap W|) + f \cdot H(1) + \left(\frac{f'}{2} - 1\right) H(|W_{-3} \cap W|) \\
	&+ \left(\frac{f'}{2} + 1\right) H(|W_{-2} \cap W|)
\end{align*}

We now evaluate this score for the different possible committees:
\begin{align*}
	\left. score_{PAV}(W_{-2}) \right| \, (P_i, P_F, P_{\overline{F}}) &= H(|\{c_1, c_2\} \cap W_{-2}|) + f \cdot H(1) + \left(\frac{f'}{2} - 1\right) H(|W_{-3} \cap W_{-2}|) \\
	&+ \left(\frac{f'}{2} + 1\right) H(|W_{-2} \cap W_{-2}|) \\
	&= H(1) + f \cdot H(1) + \left(\frac{f'}{2} - 1\right) H(2) + \left(\frac{f'}{2} + 1\right) H(3) \\
	&= 1 + f \cdot 1 + \left(\frac{f'}{2} - 1\right) \left(1 + \frac{1}{2}\right) + \left(\frac{f'}{2} + 1\right) \left(1 + \frac{1}{2} + \frac{1}{3}\right) \\
	&= 1 + f + 1.5 \cdot \frac{f'}{2} - 1.5 + \frac{11}{6} \cdot \frac{f'}{2} + \frac{11}{6} = f + 1\frac{2}{3} f' + 1\frac{1}{3}
\end{align*}

\begin{align*}
	\left. score_{PAV}(W_{-3}) \right| \, (P_i, P_F, P_{\overline{F}}) &= H(|\{c_1, c_2\} \cap W_{-3}|) + f \cdot H(1) + \left(\frac{f'}{2} - 1\right) H(|W_{-3} \cap W_{-3}|) \\
	&+ \left(\frac{f'}{2} + 1\right) H(|W_{-2} \cap W_{-3}|) \\
	&= H(2) + f \cdot H(1) + \left(\frac{f'}{2} - 1\right) H(3) + \left(\frac{f'}{2} + 1\right) H(2) \\
	&= 1.5 + f \cdot 1 + \left(\frac{f'}{2} - 1\right) \left(1 + \frac{1}{2} + \frac{1}{3}\right) + \left(\frac{f'}{2} + 1\right) \left(1 + \frac{1}{2}\right) \\
	&= 1.5 + f + \frac{11}{6} \cdot \frac{f'}{2} - \frac{11}{6} + 1.5 \cdot \frac{f'}{2} + 1.5 = f + 1\frac{2}{3} f' + 1\frac{1}{6}
\end{align*}

\begin{align*}
	\left. score_{PAV}(W_{-4}) \right| \, (P_i, P_F, P_{\overline{F}}) &= H(|\{c_1, c_2\} \cap W_{-4}|) + f \cdot H(1) + \left(\frac{f'}{2} - 1\right) H(|W_{-3} \cap W_{-4}|) \\
	&+ \left(\frac{f'}{2} + 1\right) H(|W_{-2} \cap W_{-4}|) \\
	&= H(2) + f \cdot H(1) + \left(\frac{f'}{2} - 1\right) H(2) + \left(\frac{f'}{2} + 1\right) H(2) \\
	&= 1.5 + f \cdot 1 + \left(\frac{f'}{2} - 1\right) \left(1 + \frac{1}{2}\right) + \left(\frac{f'}{2} + 1\right) \left(1 + \frac{1}{2}\right) \\
	&= 1.5 + f + 1.5 \cdot \frac{f'}{2} - 1.5 + 1.5 \cdot \frac{f'}{2} + 1.5 = f + 1.5 f' + 1.5
\end{align*}

Note that, as expected, $W_{-4}$ receives a significantly lower score than the others (since $f'$ is slightly less than $\frac{2}{3} \cdot n$, this holds clearly even for $n \ge 6$ voters, and we simply assume a sufficiently large $n$). Furthermore, the score of $W_{-2}$ is strictly greater than the score of $W_{-3}$ (by a difference of $\frac{1}{6}$). Given these scores, the winning committee must be $W_{-2}$. That is, $\mathcal{R}(P_i, P_F, P_{\overline{F}}) = W_{-2} = \{c_1, c_3, c_4\}$.

\medskip

We now examine the outcome under the manipulated profile $(A_i, P_F, P_{\overline{F}})$. The initial steps of the decomposition remain identical. Recall that $A_i = \{c_2\}$; thus, for a general committee $W \neq W_{-1}$, we have:
\begin{align*}
	\left. score_{PAV}(W) \right| \, (A_i, P_F, P_{\overline{F}}) \\
	&= H(|A_i \cap W|) + f \cdot H(|\{c_1\} \cap W|) + \left(\frac{f'}{2} - 1\right) H(|\{c_1, c_2, c_4\} \cap W|) \\
	&+ \left(\frac{f'}{2} + 1\right) H(|\{c_1, c_3, c_4\} \cap W|) \\
	&= H(|\{c_2\} \cap W|) + f \cdot H(1) + \left(\frac{f'}{2} - 1\right) H(|W_{-3} \cap W|) \\
	&+ \left(\frac{f'}{2} + 1\right) H(|W_{-2} \cap W|)
\end{align*}

We now evaluate this score explicitly for the remaining possible committees, $W_{-2}$, $W_{-3}$, and $W_{-4}$:

\begin{align*}
	\left. score_{PAV}(W_{-2}) \right| \, (A_i, P_F, P_{\overline{F}}) &= H(|\{c_2\} \cap W_{-2}|) + f \cdot H(1) + \left(\frac{f'}{2} - 1\right) H(|W_{-3} \cap W_{-2}|) \\
	&+ \left(\frac{f'}{2} + 1\right) H(|W_{-2} \cap W_{-2}|) \\
	&= H(0) + f \cdot H(1) + \left(\frac{f'}{2} - 1\right) H(2) + \left(\frac{f'}{2} + 1\right) H(3) \\
	&= 0 + f \cdot 1 + \left(\frac{f'}{2} - 1\right) \left(1 + \frac{1}{2}\right) + \left(\frac{f'}{2} + 1\right) \left(1 + \frac{1}{2} + \frac{1}{3}\right) \\
	&= f + 1.5 \cdot \frac{f'}{2} - 1.5 + \frac{11}{6} \cdot \frac{f'}{2} + \frac{11}{6} = f + 1\frac{2}{3} f' + \frac{1}{3}
\end{align*}

\begin{align*}
	\left. score_{PAV}(W_{-3}) \right| \, (A_i, P_F, P_{\overline{F}}) &= H(|\{c_2\} \cap W_{-3}|) + f \cdot H(1) + \left(\frac{f'}{2} - 1\right) H(|W_{-3} \cap W_{-3}|) \\
	&+ \left(\frac{f'}{2} + 1\right) H(|W_{-2} \cap W_{-3}|) \\
	&= H(1) + f \cdot H(1) + \left(\frac{f'}{2} - 1\right) H(3) + \left(\frac{f'}{2} + 1\right) H(2) \\
	&= 1 + f \cdot 1 + \left(\frac{f'}{2} - 1\right) \left(1 + \frac{1}{2} + \frac{1}{3}\right) + \left(\frac{f'}{2} + 1\right) \left(1 + \frac{1}{2}\right) \\
	&= 1 + f + \frac{11}{6} \cdot \frac{f'}{2} - \frac{11}{6} + 1.5 \cdot \frac{f'}{2} + 1.5 = f + 1\frac{2}{3} f' + \frac{2}{3}
\end{align*}

\begin{align*}
	\left. score_{PAV}(W_{-4}) \right| \, (A_i, P_F, P_{\overline{F}}) &= H(|\{c_2\} \cap W_{-4}|) + f \cdot H(1) + \left(\frac{f'}{2} - 1\right) H(|W_{-3} \cap W_{-4}|) \\
	&+ \left(\frac{f'}{2} + 1\right) H(|W_{-2} \cap W_{-4}|) \\
	&= H(1) + f \cdot H(1) + \left(\frac{f'}{2} - 1\right) H(2) + \left(\frac{f'}{2} + 1\right) H(2) \\
	&= 1 + f \cdot 1 + \left(\frac{f'}{2} - 1\right) \left(1 + \frac{1}{2}\right) + \left(\frac{f'}{2} + 1\right) \left(1 + \frac{1}{2}\right) \\
	&= 1 + f + 1.5 \cdot \frac{f'}{2} - 1.5 + 1.5 \cdot \frac{f'}{2} + 1.5 = f + 1.5 f' + 1
\end{align*}

As the explicit calculations show, $W_{-4}$ continues to receive a significantly lower score than the others (due to the smaller coefficient of $f'$). Furthermore, the score of $W_{-3}$ is strictly greater than the score of $W_{-2}$ (by a difference of $\frac{1}{3}$). Given these scores, the winning committee under the manipulated profile must be $W_{-3}$. That is, $\mathcal{R}(A_i, P_F, P_{\overline{F}}) = W_{-3} = \{c_1, c_2, c_4\}$.

\medskip

Recall that voter $i$'s true preferences are $P_i = \{c_1, c_2\}$. Under the truthful profile, the winning committee was $W_{-2}$, which intersects with voter $i$'s true preferences only at $\{c_1\}$. By insincerely submitting the manipulated ballot $A_i = \{c_2\}$, voter $i$ changes the outcome to $W_{-3}$, which intersects with their true preferences at $\{c_1, c_2\}$. Since $\{c_1\} \subsetneq \{c_1, c_2\}$, the new outcome provides a strict superset of the original outcome's intersection.

Therefore, this explicit strict superset satisfies Condition 2 of Definition \ref{def:rasp_dropping_f}. Since Condition 1 also holds, this concludes the proof that the PAV rule is not Superset Risk-Avoiding Strategy-Proof under dropping candidates given $f = \left\lceil \frac{n}{k} \right\rceil$.

\end{proof}

\begin{claim}
	The PAV rule is Superset Risk-Avoiding Strategy-Proof under dropping candidates given $f = \lfloor \frac{n}{k+1} \rfloor - 1$.
\end{claim}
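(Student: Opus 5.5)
The plan is to show that Condition~1 of Definition~\ref{def:rasp_dropping_f} always fails when $|F| = f \le \lfloor \frac{n}{k+1} \rfloor - 1$. Fix an arbitrary voter $i$, a set $F \subseteq N\setminus\{i\}$ of size $f$, arbitrary known ballots $P_F$, a truthful ballot $P_i$, and a manipulation $A_i \subsetneq P_i$. I will construct a sub-profile $P_{\overline{F}}$ such that the truthful outcome contains some $c \in P_i \setminus A_i$ while the manipulated outcome does not contain $c$. Then $\mathcal{R}(A_i,P_F,P_{\overline{F}})\cap P_i \not\supseteq \mathcal{R}(P_i,P_F,P_{\overline{F}})\cap P_i$, so no safe manipulation exists.

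The counting fact that drives everything is $f+1 \le \frac{n}{k+1}$. This gives $|\overline{F}| = n-1-f \ge n - \frac{n}{k+1} = \frac{kn}{k+1} \ge k(f+1)$. So the unknown voters can be split into $k$ blocks, each strictly larger than $F$, with possibly some leftover voters.

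The construction has three parts.
\begin{enumerate}
\item Pick $c \in P_i\setminus A_i$. Since $P_i \subsetneq C$, also pick a rival $d \notin P_i$.
\item Choose $k-1$ further candidates $e_1,\dots,e_{k-1}$, preferably outside $P_i$. Give each $e_j$ a block of at least $f+1$ voters with ballot exactly $\{e_j\}$.
\item Let the remaining unknown voters approve exactly $\{c\}$ or exactly $\{d\}$.
\end{enumerate}

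I would first prove that every $e_j$ is in the winning committee under both ballots of $i$. Suppose $e_j$ is swapped for some candidate $x$. Then $e_j$'s block loses at least $f+1$ points, while $x$ gains at most $f$ points from $F$ plus at most $1$ from voter $i$. Removing $x$ never costs the unknown voters anything, because each of them approves only a single candidate. To make this argument strict, I will add one extra voter to each block, or use the lexicographic tie-breaking.

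Once the $e_j$ are forced, the last seat is a contest between $c$, $d$, and candidates supported only by $F$. The score gain of $c$ versus $d$ in that seat is a fixed quantity determined by $P_F$ and the $\{c\}$/$\{d\}$ counts, plus voter $i$'s term. Under $P_i$, voter $i$ adds $\frac{1}{|P_i\cap W|}$ to $c$'s marginal, which is positive. Under $A_i$ this term is $0$. So I tune the numbers of $\{c\}$ and $\{d\}$ voters so that $d$ beats $c$ without voter $i$'s support but $c$ wins with it. Two further constraints must hold in this tuning: both $c$ and $d$ must strictly beat every candidate supported only by $F$, and ties must be resolved by the lexicographic order where needed.

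I expect the main obstacle to be this fine-tuning. The difficulties are:
\begin{itemize}
\item Marginal contributions are fractional, because voters in $F$ may approve several forced candidates.
\item Integer counts must create a threshold strictly inside a window of width less than $1$.
\item The leftover voter budget must suffice for this, and the count $n-1-f \ge k(f+1)$ is exactly what should make it fit.
\item Degenerate cases arise when $P_i\cup A_i$ covers nearly all of $C$, so the $e_j$ cannot be taken outside $P_i$. There voter $i$'s approvals of some $e_j$ shift $c$'s marginal by harmonic fractions.
\end{itemize}
I would handle these cases by letting $c$'s contribution from voter $i$ be $\frac{1}{t}$ with $t = |P_i \cap W|$, which is still strictly positive. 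I would then choose the counts so that the threshold falls strictly between $0$ and $\frac{1}{t}$. If exact balancing cannot be achieved with singleton ballots, I would first try to fix it by also giving a few unknown voters the ballot $\{c,d\}$ to shift the difference by fractional amounts.
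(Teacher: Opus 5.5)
Your architecture matches the paper's: fix $c\in P_i\setminus A_i$, pin $k-1$ seats, and let voter $i$'s truthful approval of $c$ decide the last seat. However, the tuning step you flag as the main obstacle genuinely fails with the ballots you propose.

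With a rival $d\notin P_i$ and singleton supporters, $n_d-n_c$ must lie in the interval $[s_c-s_d,\ s_c-s_d+\frac{1}{t+1}]$, open at one end depending on the tie-breaking order. Here $t$ is the number of pinned candidates in $P_i$, and $s_c,s_d$ are the fractional marginals coming from $F$. Since $n_d-n_c$ moves in integer steps, this interval can contain no admissible value as soon as $t\ge 1$. Your patch does not help: with $E=\{e_1,\dots,e_{k-1}\}$, a $\{c,d\}$ voter adds exactly $1$ to both $E\cup\{c\}$ and $E\cup\{d\}$, so it does not move their difference at all.

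Concretely, take $C=\{a,c,d\}$, $k=2$, $n=6$ (so $f=1$), $P_i=\{a,c\}$, $A_i=\{a\}$, and $F=\{j\}$ with $P_j=\{a,c\}$. Let the tie-breaking order prefer $\{a,d\}$ to $\{a,c\}$. Your plan is forced to take $e_1=a$. With $b$ voters on $\{a\}$, the manipulated scores of $\{a,c\}$ and $\{a,d\}$ are $2.5+b+n_c$ and $2+b+n_d$. The truthful ones are $3+b+n_c$ and $2+b+n_d$. So the comparison you intend to tune requires $\frac12<n_d-n_c<1$, which no integer satisfies.

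The missing idea is the paper's choice of $P_{\overline{F}}$. \emph{Every} unknown voter approves all $k-1$ pinned candidates $c_1,\dots,c_{k-1}$, and $x'$ resp.\ $x''$ of them additionally approve $c'$ resp.\ $c''$. Each extra approval is then worth exactly $\frac1k$ in the last-seat contest. The window has width $\frac{1}{t+1}-p''+a''$, which equals $\frac{1}{t+1}$ if $c''\notin P_i$ and $\frac{1}{l+1}$ if $c''\in A_i$, so it is always at least $\frac1k$. Hence taking $x''-x'=\lfloor k(s_{c'}-s_{c''}-a'')\rfloor+1$ or $\lceil k(s_{c'}-s_{c''}-a'')\rceil$, according to which of $W',W''$ wins ties, always works. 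In the example, three $\{a\}$ voters and one $\{a,d\}$ voter do it.

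The same design also repairs two other weak points of your plan:
\begin{enumerate}
\item Your pinning argument claims that removing $x$ ``never costs the unknown voters anything.'' This is false when $x\in\{c,d\}$ or $x$ is another $e_{j'}$.
\item Disjoint singleton blocks consume $(k-1)(f+1)$ of the guaranteed $k(f+1)$ unknown voters. That leaves only $f+1$ for $c$ and $d$ together, which need not suffice when both must outscore an alternative in $A_i$ backed by all of $F$.
\end{enumerate}
In the paper, the same $f'\ge k(f+1)$ voters both pin $c_1,\dots,c_{k-1}$ (via the exchange bound $(f+1)-\frac{f'}{k}\le 0$) and decide the last seat. Moreover, $c',c'',c_1,\dots,c_{k-1}$ are read off the PAV winner $W_{F,A_i}$ of the reduced election on $F\cup\{i\}$, allowing $c''\in A_i$. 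This is what rules out every other committee --- the part your proposal leaves as an unverified side constraint.
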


\begin{proof}
	Assume by contradiction that the mechanism is vulnerable to the superset risk-avoiding strategy under dropping candidates given $f = \lfloor \frac{n}{k+1} \rfloor - 1$. We will show that the conditions for vulnerability cannot hold.
		
	By our assumption, there exists a voter $i \in N$ and a subset of voters $F \subseteq N \setminus \{i\}$ with $|F| = \lfloor \frac{n}{k+1} \rfloor - 1$, such that for voter $i$ there exists a truthful ballot $P_i$ and a manipulated ballot $A_i \subsetneq P_i$. We will demonstrate that Condition 1 of Definition \ref{def:rasp_dropping_f} is violated.
	
	Since $A_i \subsetneq P_i$, there must exist at least one candidate who is in $P_i$ but not in $A_i$. Let us denote one such candidate as $c'$. Thus, we have $c' \in P_i$ and $c' \notin A_i$. Additionally, let us denote the size of the remaining set of voters as $|\overline{F}| = f'$ (where $f' = n - f - 1$).
	
	To simplify the understanding of the proof, we will first address the worst-case scenario, which we refer to as Step 1. Subsequently, in Step 2, we will handle the general case. While the proof does not strictly require Step 1 to be valid, it serves to make the intuition significantly clearer.
	
\medskip
\noindent \textbf{Step 1:} The worst-case scenario occurs when every known voter $j \in F$ casts the exact ballot $P_j = \{c'\}$, and furthermore, $c'$ has lexicographic priority over all other candidates (meaning $c'$ will win any tie-breaking). In this situation, the manipulating voter has certain knowledge of maximal support for $c'$ (relative to the size of $F$). We will show that even under these optimal conditions for $c'$, the manipulation of dropping $c'$ is not safe.

Let $c_1, \ldots, c_{k-1}, c'' \in C$ be candidates such that $c_1 \neq c', \ldots, c_{k-1} \neq c'$, and $c'' \neq c'$. We aim to select the candidates $c_1, \ldots, c_{k-1}$ such that they are included in $A_i$ as much as possible, assigning them the lowest possible indices. We choose the candidate $c''$ to be one who is not even included in $P_i$ (such a candidate must exist, since by definition, an approval ballot is a strict subset of $C$).

Let $l$ denote the number of candidates from $\{c_1, \ldots, c_{k-1}\}$ that are present in $A_i$, and let $t$ denote the number of candidates from $\{c_1, \ldots, c_{k-1}\}$ that are present in $P_i$.

We construct the profile of the remaining voters, $P_{\overline{F}}$, such that exactly $f \cdot k + 1$ voters from $\overline{F}$ vote for $\{c_1, \ldots, c_{k-1}, c''\}$, and the rest vote for $\{c_1, \ldots, c_{k-1}\}$. Since $\overline{F}$ contains $f' = n - f - 1$ voters, we must first verify that this chosen quantity of voters is valid (i.e., that there are enough voters in $\overline{F}$):

\begin{align*}
	f \cdot k + 1 &= f(k+1) - f + 1 = \left( \left\lfloor \frac{n}{k+1} \right\rfloor - 1 \right)(k+1) - f + 1 \\
	&= (k+1)\left\lfloor \frac{n}{k+1} \right\rfloor - (k+1) - f + 1 \le n - k - 1 - f + 1 \\
	&= (n - f - 1) - k + 1 = f' - (k-1) \le f'
\end{align*}

Thus, there are indeed more than enough voters in $\overline{F}$ to construct this profile.

Let $W' = \{c_1, \ldots, c_{k-1}, c'\}$ and $W'' = \{c_1, \ldots, c_{k-1}, c''\}$. We will now determine which of these committees maximizes the PAV score and becomes the winning committee in each scenario.

First, we calculate the scores under the manipulated profile $(A_i, P_F, P_{\overline{F}})$:

\begin{align*}
	\left. score_{PAV}(W'') \right| \, (A_i, P_F, P_{\overline{F}}) &= H(|A_i \cap W''|) + \sum_{j \in F} H(|\{c'\} \cap W''|) + \sum_{j \in \overline{F}} H(|P_j \cap W''|) \\
	&= H(l) + f \cdot H(0) + \big(f' - (f \cdot k + 1)\big) \cdot H(k-1) + (f \cdot k + 1) \cdot H(k) \\
	&= H(l) + 0 + f' \cdot H(k-1) + (f \cdot k + 1) \cdot \big(H(k) - H(k-1)\big) \\
	&= H(l) + f' \cdot H(k-1) + (f \cdot k + 1) \cdot \frac{1}{k} \\
	&= H(l) + f' \cdot H(k-1) + f \cdot k \cdot \frac{1}{k} + \frac{1}{k} \\
	&= H(l) + f' \cdot H(k-1) + f + \frac{1}{k}
\end{align*}

\begin{align*}
	\left. score_{PAV}(W') \right| \, (A_i, P_F, P_{\overline{F}}) &= H(|A_i \cap W'|) + \sum_{j \in F} H(|\{c'\} \cap W'|) + \sum_{j \in \overline{F}} H(|P_j \cap W'|) \\
	&= H(l) + f \cdot H(1) + f' \cdot H(k-1) \\
	&= H(l) + f + f' \cdot H(k-1)
\end{align*}

Thus, $W''$ achieves a strictly greater score than $W'$ (by a difference of $\frac{1}{k}$). Therefore, $W''$ defeats $W'$ under the manipulated profile.

Next, we examine the committees under the truthful profile $(P_i, P_F, P_{\overline{F}})$:

\begin{align*}
	\left. score_{PAV}(W'') \right| \, (P_i, P_F, P_{\overline{F}}) &= H(|P_i \cap W''|) + \sum_{j \in F} H(|\{c'\} \cap W''|) + \sum_{j \in \overline{F}} H(|P_j \cap W''|) \\
	&= H(t) + f \cdot H(0) + \big(f' - (f \cdot k + 1)\big) \cdot H(k-1) + (f \cdot k + 1) \cdot H(k) \\
	&= H(t) + 0 + f' \cdot H(k-1) + (f \cdot k + 1) \cdot \big(H(k) - H(k-1)\big) \\
	&= H(t) + f' \cdot H(k-1) + (f \cdot k + 1) \cdot \frac{1}{k} \\	
	&= H(t) + f' \cdot H(k-1) + f \cdot k \cdot \frac{1}{k} + \frac{1}{k} \\
	&= H(t) + f' \cdot H(k-1) + f + \frac{1}{k}
\end{align*}

\begin{align*}
	\left. score_{PAV}(W') \right| \, (P_i, P_F, P_{\overline{F}}) &= H(|P_i \cap W'|) + \sum_{j \in F} H(|\{c'\} \cap W'|) + \sum_{j \in \overline{F}} H(|P_j \cap W'|) \\
	&= H(t+1) + f \cdot H(1) + f' \cdot H(k-1) \\
	&= H(t) + \frac{1}{t+1} + f + f' \cdot H(k-1)
\end{align*}

Under the truthful profile, the score of $W'$ exceeds the score of $W''$ by $\frac{1}{t+1} - \frac{1}{k}$. Recall that $t$ was defined as the number of candidates from $\{c_1, \ldots, c_{k-1}\}$ present in $P_i$. Thus, it is clear that $t \le k-1$, which implies $t+1 \le k$, and consequently $\frac{1}{t+1} \ge \frac{1}{k}$. Since the difference $\frac{1}{t+1} - \frac{1}{k} \ge 0$ is non-negative, $W'$ defeats $W''$ in this scenario. Recall that we assumed $c'$ has lexicographic priority over all other candidates; therefore, even if this inequality is tight (i.e., the difference is exactly zero), $W'$ will still win the tie-breaking against $W''$.

\medskip

It remains to be shown that under both the manipulated profile $(A_i, P_F, P_{\overline{F}})$ and the truthful profile $(P_i, P_F, P_{\overline{F}})$, no other committee can defeat $W''$ and $W'$, respectively.

Suppose there exists such a committee, denoted by $W^*$, and we will examine various cases for it. First, we consider the case where it is composed strictly of the candidates $\{c_1, \ldots, c_{k-1}, c'', c'\}$. There are exactly $k+1$ candidates here, and thus exactly one candidate must be excluded to form a committee of size $k$. The possibilities of excluding $c''$ or $c'$ have already been analyzed (these correspond to $W''$ and $W'$, respectively). Therefore, we examine the possibility of excluding one candidate from $\{c_1, \ldots, c_{k-1}\}$.

\begin{align*}
	\left. score_{PAV}(W^*) \right| \, (A_i, P_F, P_{\overline{F}}) &= H(|A_i \cap W^*|) + \sum_{j \in F} H(|\{c'\} \cap W^*|) + \sum_{j \in \overline{F}} H(|P_j \cap W^*|) \\
	&\overset{(1)}{\le} H(l) + f \cdot H(1) + \big(f' - (f \cdot k + 1)\big) \cdot H(k-2) + (f \cdot k + 1) \cdot H(k-1) \\
	&= H(l) + f + f' \cdot H(k-2) + (f \cdot k + 1) \cdot \big(H(k-1) - H(k-2)\big) \\
	&= H(l) + f + f' \cdot H(k-1) - f' \cdot \frac{1}{k-1} + (f \cdot k + 1) \cdot \frac{1}{k-1} \\
	&= H(l) + f + f' \cdot H(k-1) - \big(f' - (f \cdot k + 1)\big) \cdot \frac{1}{k-1} \\
	&\overset{(2)}{\le} H(l) + f + f' \cdot H(k-1) \\
	&< H(l) + f' \cdot H(k-1) + f + \frac{1}{k}
\end{align*}

\smallskip
\noindent \footnotesize
$^{(1)}$ Holds because if the excluded candidate is in $A_i$, we obtain $H(l-1)$; if not, we obtain $H(l)$. \\
$^{(2)}$ Holds because we previously showed that $f \cdot k + 1 \le f'$, which implies $f' - (f \cdot k + 1) \ge 0$.
\normalsize
\medskip

This demonstrates that the score obtained for $W^*$ is strictly less than the score $W''$ received under the manipulated profile $(A_i, P_F, P_{\overline{F}})$. Therefore, $W''$ defeats such committees.

\medskip

Next, we examine this for the truthful profile $(P_i, P_F, P_{\overline{F}})$:
\begin{align*}
	\left. score_{PAV}(W^*) \right| \, (P_i, P_F, P_{\overline{F}}) &= H(|P_i \cap W^*|) + \sum_{j \in F} H(|\{c'\} \cap W^*|) + \sum_{j \in \overline{F}} H(|P_j \cap W^*|) \\
	&\overset{(1)}{\le} H(t) + f \cdot H(1) + \big(f' - (f \cdot k + 1)\big) \cdot H(k-2) + (f \cdot k + 1) \cdot H(k-1) \\
	&= H(t) + f + f' \cdot H(k-2) + (f \cdot k + 1) \cdot \big(H(k-1) - H(k-2)\big) \\
	&= H(t) + f + f' \cdot H(k-1) - f' \cdot \frac{1}{k-1} + (f \cdot k + 1) \cdot \frac{1}{k-1} \\
	&= H(t) + f + f' \cdot H(k-1) - \big(f' - (f \cdot k + 1)\big) \cdot \frac{1}{k-1} \\
	&\overset{(2)}{\le} H(t) + f + f' \cdot H(k-1) \\
	&< H(t) + f' \cdot H(k-1) + f + \frac{1}{t+1}
\end{align*}

\smallskip
\noindent \footnotesize
The algebraic transitions and the justifications for $(1)$ and $(2)$ are equivalent to the previous case.
\normalsize
\medskip

This shows that the score obtained for $W^*$ is strictly less than the score $W'$ received under the truthful profile $(P_i, P_F, P_{\overline{F}})$. Hence, $W'$ defeats such committees as well.
\medskip

We now consider what happens if $W^*$ includes candidates that are not from the set $\{c_1, \ldots, c_{k-1}, c'', c'\}$. Such candidates are neither present in the ballots of $F$ nor in the ballots of $\overline{F}$; thus, they will definitively contribute nothing to the score from these subsets of voters. 

If these added candidates are in $A_i$, it means that all candidates from $\{c_1, \ldots, c_{k-1}\}$ are already in $A_i$ as well (due to our deliberate selection of this set). In this case, adding one of these candidates to the committee is equivalent to swapping one of the original candidates for them. While such an addition can contribute at most $\frac{1}{k}$ to the score of voter $i$, it will simultaneously cause a loss of at least $f' \cdot \frac{1}{k}$ from the score of $\overline{F}$ (since no candidate from the original set contributes less than $\frac{1}{k}$ to $\overline{F}$, and we have removed a contributing candidate to add a non-contributing one). Consequently, such committees will definitively fail to win. 

If the added candidates are not in $A_i$, and also not in $P_i$, then no voter casts a ballot for them whatsoever. Thus, they cannot contribute to the score at all, but only detract from it (by displacing a candidate who might have contributed). 

The only remaining case is adding candidates who are in $P_i$ but not in $A_i$. Under the manipulated profile $(A_i, P_F, P_{\overline{F}})$, they obviously contribute nothing, and therefore $W''$ will still defeat such committees. Under the truthful profile $(P_i, P_F, P_{\overline{F}})$, we can say these candidates are equivalent to $c'$, but they will be unable to displace $c'$ from the winning committee. This is because $c'$ receives additional votes from the entire set $F$ (and even in the extreme case where $F$ is empty, $c'$ holds lexicographic priority). Therefore, even if we cannot definitively state that $W'$ is the unique winning committee for $(P_i, P_F, P_{\overline{F}})$ in this specific sub-case, we can state with absolute certainty that $c'$ will be a member of the winning committee.

Thus, we have established that $\mathcal{R}(A_i, P_F, P_{\overline{F}}) = W''$, and $c' \in \mathcal{R}(P_i, P_F, P_{\overline{F}})$. Since $c' \notin W''$, it immediately follows that $c' \notin \mathcal{R}(A_i, P_F, P_{\overline{F}})$. 

Recall that we initially defined $c' \in P_i$. Therefore, we have necessarily found a profile where:
\[
\mathcal{R}(A_i, P_F, P_{\overline{F}}) \cap P_i \not\supseteq \mathcal{R}(P_i, P_F, P_{\overline{F}}) \cap P_i
\]
This means the manipulation is not safe (it violates Condition 1). We have thus shown that the claim holds for the worst-case scenario described in Step 1.

\medskip
\noindent \textbf{Step 2:} 

In this step, we will "run" the PAV mechanism as if the only voters present were $F \cup \{i\}$, where voter $i$'s ballot is $A_i$. Let us denote the winning committee in this execution by $W_{F, A_i}$, and the score this committee received by $w_{F, A_i}$.

Let $G_i = P_i \setminus A_i$. This is essentially the set from which we selected $c'$ earlier. Previously, we chose $c'$ arbitrarily; now, we will select it specifically, with the sole strict requirement being $c' \in G_i$ in all cases. Additionally, we will define various candidates using the notations $c''$ and $c_1, \ldots, c_{k-1}$. For candidate $c''$, we will require that $c'' \notin G_i$. This is necessarily possible because approval ballots must be strict subsets of the candidate set $C$ and cannot be empty, ensuring there are always candidates not in $G_i$. Furthermore, since $|W_{F, A_i}| = k$ and by definition $1 \le k < m$, there are necessarily some candidates inside $W_{F, A_i}$ and some candidates outside of $W_{F, A_i}$.

We divide the proof into three distinct cases:
\begin{itemize}
	\item \textbf{Case 1:} $W_{F, A_i} \subseteq G_i$.
	\item \textbf{Case 2:} $W_{F, A_i} \cap G_i = \emptyset$.
	\item \textbf{Case 3:} All other scenarios. That is, the winning committee $W_{F, A_i}$ contains both candidates who are in $G_i$ and candidates who are not in $G_i$. Simply put, $W_{F, A_i} \cap G_i \neq \emptyset$ and $W_{F, A_i} \setminus G_i \neq \emptyset$.
\end{itemize}

We further divide Case 3 into two sub-cases:
\begin{itemize}
	\item \textbf{Case 3.1:} $(C \setminus W_{F, A_i}) \cap G_i = \emptyset$.
	\item \textbf{Case 3.2:} $(C \setminus W_{F, A_i}) \cap G_i \neq \emptyset$.
\end{itemize}

Note that in both Case 1 and Case 3.1, it holds that $W_{F, A_i} \cap G_i \neq \emptyset$ and $(C \setminus W_{F, A_i}) \setminus G_i \neq \emptyset$. Similarly, note that in both Case 2 and Case 3.2, it holds that $W_{F, A_i} \setminus G_i \neq \emptyset$ and $(C \setminus W_{F, A_i}) \cap G_i \neq \emptyset$.

First, we define the choices for Case 1 and Case 3.1, where we know that $W_{F, A_i} \cap G_i \neq \emptyset$. We examine each candidate in $W_{F, A_i} \cap G_i$ to determine what happens if we remove only that candidate from the committee $W_{F, A_i}$—specifically, by how much the score of $W_{F, A_i}$ would decrease without them under the profile $(P_F, P_i)$. The candidate whose removal decreases the score by the least amount will be denoted as $c'$. In the event of a tie among several candidates, we choose the one that is lexicographically "weakest". Note that since we selected from $W_{F, A_i} \cap G_i$, it is clear that $c' \in G_i$, as required.

Since $W_{F, A_i}$ contains exactly $k$ candidates, we denote the remaining candidates, $W_{F, A_i} \setminus \{c'\}$, as $c_1, \ldots, c_{k-1}$.

Next, we examine the remaining candidates who are neither in $W_{F, A_i}$ nor in $G_i$, to determine which candidate would contribute the most to the PAV score if added to the committee $\{c_1, \ldots, c_{k-1}\}$ when our voters are $F \cup \{i\}$ and voter $i$'s ballot is $A_i$. We denote this candidate as $c''$. In other words, $c''$ is the candidate from $(C \setminus W_{F, A_i}) \setminus G_i$ whose addition to $W_{F, A_i} \setminus \{c'\}$ contributes the most to the committee's score under the profile $(P_F, A_i)$. In the event of a tie, we select the one that is lexicographically "strongest". Since in these cases it holds that $(C \setminus W_{F, A_i}) \setminus G_i \neq \emptyset$, such a candidate clearly exists, and according to the set from which they were chosen, $c'' \notin G_i$.

We now handle Case 2 and Case 3.2, where we know that $W_{F, A_i} \setminus G_i \neq \emptyset$. We examine each candidate in $W_{F, A_i} \setminus G_i$ to determine what happens if we remove only that candidate from the committee $W_{F, A_i}$, observing by how much the score of $W_{F, A_i}$ would decrease without them under the profile $(P_F, A_i)$. The candidate whose removal decreases the score by the least amount will be denoted as $c''$. If there is a tie, we choose the one that is lexicographically "weakest". Note that since we selected from $W_{F, A_i} \setminus G_i$, it is clear that $c'' \notin G_i$.

Since $W_{F, A_i}$ contains exactly $k$ candidates, we denote the remaining candidates, $W_{F, A_i} \setminus \{c''\}$, as $c_1, \ldots, c_{k-1}$.

Next, we examine the candidates in $G_i$ who are not in $W_{F, A_i}$, to determine which candidate would contribute the most to the PAV score if added to the committee $\{c_1, \ldots, c_{k-1}\}$ when our voters are $F \cup \{i\}$ and voter $i$'s ballot is now $P_i$. We denote this candidate as $c'$. In other words, $c'$ is the candidate from $(C \setminus W_{F, A_i}) \cap G_i$ whose addition to $W_{F, A_i} \setminus \{c''\}$ contributes the most to the committee's score under the profile $(P_F, P_i)$. In the event of a tie, we select the one that is lexicographically "strongest". Since in these cases it holds that $(C \setminus W_{F, A_i}) \cap G_i \neq \emptyset$, such a candidate clearly exists, and according to the set from which they were chosen, $c' \in G_i$, as required.

Thus, across all cases, we have explicitly defined which candidate from $G_i$ is chosen as $c'$, and we have selected the candidates $c_1, \ldots, c_{k-1}$ as well as the candidate $c''$. From here, we will proceed uniformly with our selection for the various cases.

Let $W' = \{c_1, \ldots, c_{k-1}, c'\}$ and $W'' = \{c_1, \ldots, c_{k-1}, c''\}$. Furthermore, let us denote $C_{K-1} = \{c_1, \ldots, c_{k-1}\}$.

Let $s_F$ denote the score contributed by the subset $F$ to the candidates in $C_{K-1}$; that is, $s_F \coloneqq \sum_{j \in F} H(|P_j \cap C_{K-1}|)$.

Let $s_{c'}$ denote the marginal score that $c'$ contributes to the score of $W'$ from the subset $F$. That is, $s_{c'} \coloneqq \sum_{j \in F} H(|P_j \cap W'|) - s_F$. Similarly, let $s_{c''}$ denote the marginal score that $c''$ contributes to the score of $W''$ from the subset $F$, meaning $s_{c''} \coloneqq \sum_{j \in F} H(|P_j \cap W''|) - s_F$.

Let $l$ denote the number of candidates in $C_{K-1}$ that are present in $A_i$, and let $t$ denote the number of candidates in $C_{K-1}$ that are present in $P_i$. Naturally, it holds that $0 \le l \le k-1$ and $0 \le t \le k-1$.

We wish to denote the marginal contribution of candidate $c''$ to the score of $W''$ from voter $i$. Since we do not know a priori whether $c'' \in A_i$ or $c'' \notin P_i$, we denote the marginal contribution of $c''$ under $P_i$ as $p''$, and the marginal contribution of $c''$ under $A_i$ as $a''$. Specifically, if $c'' \in A_i$, then $p'' = \frac{1}{t+1}$ and $a'' = \frac{1}{l+1}$. Otherwise (i.e., if $c'' \notin P_i$), $p'' = 0$ and $a'' = 0$.

We choose the profile $P_{\overline{F}}$ such that all voters in $\overline{F}$ vote for $\{c_1, \ldots, c_{k-1}\}$. Among them, $x'$ voters will additionally vote for $c'$, and $x''$ voters will additionally vote for $c''$, where we aim to maximize the overlap between these $x'$ and $x''$ voters. Explicitly, $\min(x', x'')$ voters from $\overline{F}$ will vote for $\{c_1, \ldots, c_{k-1}, c', c''\}$. Additionally, $\max(x'-x'', 0)$ voters will vote for $\{c_1, \ldots, c_{k-1}, c'\}$, $\max(x''-x', 0)$ voters will vote for $\{c_1, \ldots, c_{k-1}, c''\}$, and the remainder, which is $f' - \max(x', x'')$ voters, will vote strictly for $\{c_1, \ldots, c_{k-1}\}$. We will define the exact values of $x'$ and $x''$ later, but note that they must satisfy $0 \le x' \le f'$ and $0 \le x'' \le f'$.

We now examine the scores of the potential winning committees in each scenario.

Under the manipulated profile $(A_i, P_F, P_{\overline{F}})$:
\begin{align*}
	\left. score_{PAV}(W'') \right| \, (A_i, P_F, P_{\overline{F}}) &= H(|A_i \cap W''|) + \sum_{j \in F} H(|P_j \cap W''|) + \sum_{j \in \overline{F}} H(|P_j \cap W''|) \\
	&= H(l) + a'' + s_F + s_{c''} + (f' - x'') \cdot H(k-1) + x'' \cdot H(k) \\
	&= H(l) + a'' + s_F + s_{c''} + f' \cdot H(k-1) + x'' \cdot \big(H(k) - H(k-1)\big) \\
	&= H(l) + a'' + s_F + s_{c''} + f' \cdot H(k-1) + x'' \cdot \frac{1}{k}
\end{align*}

\begin{align*}
	\left. score_{PAV}(W') \right| \, (A_i, P_F, P_{\overline{F}}) &= H(|A_i \cap W'|) + \sum_{j \in F} H(|P_j \cap W'|) + \sum_{j \in \overline{F}} H(|P_j \cap W'|) \\
	&= H(l) + s_F + s_{c'} + (f' - x') \cdot H(k-1) + x' \cdot H(k) \\
	&= H(l) + s_F + s_{c'} + f' \cdot H(k-1) + x' \cdot \big(H(k) - H(k-1)\big) \\
	&= H(l) + s_F + s_{c'} + f' \cdot H(k-1) + x' \cdot \frac{1}{k}
\end{align*}

Thus, we obtain a gap of $(x'' - x') \cdot \frac{1}{k} + a'' + s_{c''} - s_{c'}$ in favor of $W''$ over $W'$.

Now, we evaluate the committees under the truthful profile $(P_i, P_F, P_{\overline{F}})$:
\begin{align*}
	\left. score_{PAV}(W'') \right| \, (P_i, P_F, P_{\overline{F}}) &= H(|P_i \cap W''|) + \sum_{j \in F} H(|P_j \cap W''|) + \sum_{j \in \overline{F}} H(|P_j \cap W''|) \\
	&= H(t) + p'' + s_F + s_{c''} + (f' - x'') \cdot H(k-1) + x'' \cdot H(k) \\
	&= H(t) + p'' + s_F + s_{c''} + f' \cdot H(k-1) + x'' \cdot \big(H(k) - H(k-1)\big) \\
	&= H(t) + p'' + s_F + s_{c''} + f' \cdot H(k-1) + x'' \cdot \frac{1}{k}
\end{align*}

\begin{align*}
	\left. score_{PAV}(W') \right| \, (P_i, P_F, P_{\overline{F}}) &= H(|P_i \cap W'|) + \sum_{j \in F} H(|P_j \cap W'|) + \sum_{j \in \overline{F}} H(|P_j \cap W'|) \\
	&= H(t+1) + s_F + s_{c'} + (f' - x') \cdot H(k-1) + x' \cdot H(k) \\
	&= H(t) + \frac{1}{t+1} + s_F + s_{c'} + f' \cdot H(k-1) + x' \cdot \big(H(k) - H(k-1)\big) \\
	&= H(t) + \frac{1}{t+1} + s_F + s_{c'} + f' \cdot H(k-1) + x' \cdot \frac{1}{k}
\end{align*}

Here, we obtain a gap of $\frac{1}{t+1} - p'' + s_{c'} - s_{c''} - (x'' - x') \cdot \frac{1}{k}$ in favor of $W'$ over $W''$.

We will now divide the proof into cases and determine the values of $x'$ and $x''$ for each.

First, we handle the scenario where $W'$ is lexicographically stronger than $W''$. In this situation, the gap $(x'' - x') \cdot \frac{1}{k} + a'' + s_{c''} - s_{c'}$ must be strictly positive, so that $W''$ defeats $W'$ under the manipulation. And meanwhile, the gap $\frac{1}{t+1} - p'' + s_{c'} - s_{c''} - (x'' - x') \cdot \frac{1}{k}$ must be non-negative, so that $W'$ defeats $W''$ without the manipulation. To satisfy this, we choose $x'$ and $x''$ such that $x'' - x' = \lfloor k(s_{c'} - s_{c''} - a'') \rfloor + 1$, and we will show that this holds.

\begin{align*}
	(x'' - x') \cdot \frac{1}{k} + a'' + s_{c''} - s_{c'} &= \left( \lfloor k(s_{c'} - s_{c''} - a'') \rfloor + 1 \right) \cdot \frac{1}{k} + a'' + s_{c''} - s_{c'} \\
	&> \left( k(s_{c'} - s_{c''} - a'') - 1 + 1 \right) \cdot \frac{1}{k} + a'' + s_{c''} - s_{c'} \\
	&= (s_{c'} - s_{c''} - a'') + a'' + s_{c''} - s_{c'} \\
	&= 0
\end{align*}

And similarly for the second gap:
\begin{align*}
	\frac{1}{t+1} - p'' + s_{c'} - s_{c''} - (x'' - x') \cdot \frac{1}{k} &= \frac{1}{t+1} - p'' + s_{c'} - s_{c''} - \left( \lfloor k(s_{c'} - s_{c''} - a'') \rfloor + 1 \right) \cdot \frac{1}{k} \\
	&\ge \frac{1}{t+1} - p'' + s_{c'} - s_{c''} - \left( k(s_{c'} - s_{c''} - a'') + 1 \right) \cdot \frac{1}{k} \\
	&= \frac{1}{t+1} - p'' + s_{c'} - s_{c''} - (s_{c'} - s_{c''} - a'') - \frac{1}{k} \\
	&= \frac{1}{t+1} - p'' + a'' - \frac{1}{k}
\end{align*}

We have shown that the expression required to be strictly positive is indeed positive. For the side required to be non-negative, we obtained the expression $\frac{1}{t+1} - p'' + a'' - \frac{1}{k}$. Thus, it remains to be shown that this expression itself is non-negative, which we will demonstrate shortly.

Next, we handle the scenario where $W''$ is lexicographically stronger than $W'$. In this situation, the gap $(x'' - x') \cdot \frac{1}{k} + a'' + s_{c''} - s_{c'}$ must be non-negative, so that $W''$ defeats $W'$ under the manipulation. And meanwhile, the gap $\frac{1}{t+1} - p'' + s_{c'} - s_{c''} - (x'' - x') \cdot \frac{1}{k}$ must be strictly positive, so that $W'$ defeats $W''$ without the manipulation. To satisfy this, we choose $x'$ and $x''$ such that $x'' - x' = \lceil k(s_{c'} - s_{c''} - a'') \rceil$, and we will show that this holds.

\begin{align*}
	(x'' - x') \cdot \frac{1}{k} + a'' + s_{c''} - s_{c'} &= \lceil k(s_{c'} - s_{c''} - a'') \rceil \cdot \frac{1}{k} + a'' + s_{c''} - s_{c'} \\
	&\ge k(s_{c'} - s_{c''} - a'') \cdot \frac{1}{k} + a'' + s_{c''} - s_{c'} \\
	&= (s_{c'} - s_{c''} - a'') + a'' + s_{c''} - s_{c'} \\
	&= 0
\end{align*}

And similarly for the second gap:
\begin{align*}
	\frac{1}{t+1} - p'' + s_{c'} - s_{c''} - (x'' - x') \cdot \frac{1}{k} &= \frac{1}{t+1} - p'' + s_{c'} - s_{c''} - \lceil k(s_{c'} - s_{c''} - a'') \rceil \cdot \frac{1}{k} \\
	&> \frac{1}{t+1} - p'' + s_{c'} - s_{c''} - \left( k(s_{c'} - s_{c''} - a'') + 1 \right) \cdot \frac{1}{k} \\
	&= \frac{1}{t+1} - p'' + s_{c'} - s_{c''} - (s_{c'} - s_{c''} - a'') - \frac{1}{k} \\
	&= \frac{1}{t+1} - p'' + a'' - \frac{1}{k}
\end{align*}

We have shown here as well that the expression required to be non-negative is indeed non-negative. For the strictly positive side, the algebraic path yielded a strict inequality directly. Thus, it remains to show that $\frac{1}{t+1} - p'' + a'' - \frac{1}{k}$ is a non-negative expression, exactly as before. We will demonstrate this now.

If $c'' \in A_i$, then $p'' = \frac{1}{t+1}$ and $a'' = \frac{1}{l+1}$. Therefore:
\[
\frac{1}{t+1} - p'' + a'' - \frac{1}{k} = \frac{1}{t+1} - \frac{1}{t+1} + \frac{1}{l+1} - \frac{1}{k} = \frac{1}{l+1} - \frac{1}{k}
\]
Since $l \le k-1$, it implies $l+1 \le k$, and thus $\frac{1}{l+1} \ge \frac{1}{k}$. Therefore, $\frac{1}{l+1} - \frac{1}{k} \ge 0$. We have shown that this is indeed a non-negative expression.

If $c'' \notin P_i$, then $p'' = 0$ and $a'' = 0$. Therefore:
\[
\frac{1}{t+1} - p'' + a'' - \frac{1}{k} = \frac{1}{t+1} - 0 + 0 - \frac{1}{k} = \frac{1}{t+1} - \frac{1}{k}
\]
Since $t \le k-1$, it implies $t+1 \le k$, and thus $\frac{1}{t+1} \ge \frac{1}{k}$. Therefore, $\frac{1}{t+1} - \frac{1}{k} \ge 0$. We have shown that this is indeed a non-negative expression.

When we defined $x'$ and $x''$ earlier, we stated that they must satisfy $0 \le x' \le f'$ and $0 \le x'' \le f'$. For it to be possible to assign values that satisfy this condition, we must demonstrate that $-f' \le x'' - x' \le f'$.

Since the definition of the difference $(x'' - x')$ utilizes the values $s_{c'}$ and $s_{c''}$, we first examine the bounds on their sizes:
\begin{align*}
	s_{c'} &= \sum_{j \in F} H(|P_j \cap W'|) - s_F \\
	&= \sum_{j \in F} H(|P_j \cap W'|) - \sum_{j \in F} H(|P_j \cap C_{K-1}|) \\
	&= \sum_{j \in F} \big( H(|P_j \cap W'|) - H(|P_j \cap C_{K-1}|) \big) \\
	&\le \sum_{j \in F} 1 = f \cdot 1 = f
\end{align*}
Furthermore, it is clear that $s_{c'} \ge 0$. We similarly show this for $s_{c''}$:
\begin{align*}
	s_{c''} &= \sum_{j \in F} H(|P_j \cap W''|) - s_F \\
	&= \sum_{j \in F} H(|P_j \cap W''|) - \sum_{j \in F} H(|P_j \cap C_{K-1}|) \\
	&= \sum_{j \in F} \big( H(|P_j \cap W''|) - H(|P_j \cap C_{K-1}|) \big) \\
	&\le \sum_{j \in F} 1 = f \cdot 1 = f
\end{align*}
Here as well, it is clear that $s_{c''} \ge 0$. 

We now show that $-f' \le x'' - x' \le f'$, starting with the case where $x'' - x' = \lfloor k(s_{c'} - s_{c''} - a'') \rfloor + 1$. Recall that by definition $f' = n - f - 1$, and in our case $f = \lfloor \frac{n}{k+1} \rfloor - 1$.

\begin{align*}
	x'' - x' &= \lfloor k(s_{c'} - s_{c''} - a'') \rfloor + 1 \\
	&\ge \lfloor k(-s_{c''} - a'') \rfloor + 1 \\
	&\ge \lfloor k(-s_{c''} - 1) \rfloor + 1 \\
	&\ge \lfloor k(-f - 1) \rfloor + 1 \\
	&= \lfloor -k \cdot f - k \rfloor + 1 \\
	&= -k \cdot f - k + 1 \\
	&= -(f \cdot k) - k + 1 \\
	&= -\big(f(k+1) - f\big) - k + 1 \\	
	&= -f(k+1) + f - k + 1 \\
	&= -\left( \left\lfloor \frac{n}{k+1} \right\rfloor - 1 \right)(k+1) + f - k + 1 \\
	&= -(k+1)\left\lfloor \frac{n}{k+1} \right\rfloor + (k + 1) + f - k + 1 \\	
	&\ge -(k+1)\left( \frac{n}{k+1} \right) + k + 1 + f - k + 1 \\
	&= -n + 1 + f + 1 \\
	&= -(n - f - 1) + 1 \\
	&= -f' + 1 \\
	&> -f'
\end{align*}

And for the other side:
\begin{align*}
	x'' - x' &= \lfloor k(s_{c'} - s_{c''} - a'') \rfloor + 1 \\
	&\le k(s_{c'} - s_{c''} - a'') + 1 \\
	&= k \cdot s_{c'} - k \cdot s_{c''} - k \cdot a'' + 1 \\
	&\le k \cdot s_{c'} + 1 \\
	&\le k \cdot f + 1 \\
	&= (k+1) \cdot f - f + 1 \\
	&= (k+1)\left( \left\lfloor \frac{n}{k+1} \right\rfloor - 1 \right) - f + 1 \\
	&= (k+1)\left\lfloor \frac{n}{k+1} \right\rfloor - (k+1) - f + 1 \\
	&\le (k+1)\left( \frac{n}{k+1} \right) - k - 1 - f + 1 \\
	&= n - f - 1 - k + 1 \\
	&= (n - f - 1) - (k - 1) \\
	&= f' - (k - 1) \\
	&< f'
\end{align*}

We also show this for the case where $x'' - x' = \lceil k(s_{c'} - s_{c''} - a'') \rceil$:
\begin{align*}
	x'' - x' &= \lceil k(s_{c'} - s_{c''} - a'') \rceil \\
	&\ge k(s_{c'} - s_{c''} - a'') \\
	&\ge k(-s_{c''} - a'') \\
	&\ge k(-s_{c''} - 1) \\
	&= -k \cdot s_{c''} - k \\
	&\ge -k \cdot f - k \\
	&= -(f \cdot k) - k \\
	&= -\big(f(k+1) - f\big) - k \\	
	&= -f(k+1) + f - k \\
	&= -\left( \left\lfloor \frac{n}{k+1} \right\rfloor - 1 \right)(k+1) + f - k \\
	&= -(k+1)\left\lfloor \frac{n}{k+1} \right\rfloor + (k + 1) + f - k \\
	&= -(k+1)\left\lfloor \frac{n}{k+1} \right\rfloor + 1 + f \\
	&\ge -(k+1)\left( \frac{n}{k+1} \right) + 1 + f \\
	&= -n + 1 + f \\
	&= -(n - f - 1) \\
	&= -f'
\end{align*}

And for the other side:
\begin{align*}
	x'' - x' &= \lceil k(s_{c'} - s_{c''} - a'') \rceil \\
	&\le \lceil k \cdot s_{c'} \rceil \\
	&\le \lceil k \cdot f \rceil \\
	&= k \cdot f \\
	&= (k+1) \cdot f - f \\
	&= (k+1)\left( \left\lfloor \frac{n}{k+1} \right\rfloor - 1 \right) - f \\
	&= (k+1)\left\lfloor \frac{n}{k+1} \right\rfloor - (k+1) - f \\
	&\le (k+1)\left( \frac{n}{k+1} \right) - k - 1 - f \\
	&= n - k - 1 - f \\
	&= (n - f - 1) - k \\
	&= f' - k \\
	&< f'
\end{align*}

Thus, we have shown that there exist $x'$ and $x''$ that satisfy the requirements. If, according to our definition of $x'' - x'$, it turns out that $x'' - x' = f'$, we will be forced to choose $x'' = f'$ and $x' = 0$. If it turns out that $x'' - x' = -f'$, we will be forced to choose $x'' = 0$ and $x' = f'$. In any other case, we will have flexibility. However, we specifically require that if $x'' - x' = 0$, we explicitly choose $x'' = 0$ and $x' = 0$.

It remains to be shown that under both the manipulated profile $(A_i, P_F, P_{\overline{F}})$ and the truthful profile $(P_i, P_F, P_{\overline{F}})$, no other committee can defeat $W''$ and $W'$, respectively.

Suppose there exists such a committee, denoted by $W^*$, and we will examine various cases for it. First, we consider the case where it is composed strictly of the candidates $\{c_1, \ldots, c_{k-1}, c'', c'\}$. There are exactly $k+1$ candidates here, and therefore exactly one candidate must be excluded to form a committee of size $k$. The possibilities of excluding $c''$ or $c'$ have already been analyzed in $W'$ and $W''$; thus, we examine the possibility of excluding one candidate from $\{c_1, \ldots, c_{k-1}\}$. Let us denote this excluded candidate by $c^{-}$. We will examine what the score of $W^*$ will be in this situation for the different voting profiles.

\begin{align*}
	\left. score_{PAV}(W^*) \right| \, (A_i, P_F, P_{\overline{F}}) &= H(|A_i \cap W^*|) + \sum_{j \in F} H(|W^* \cap P_j|) + \sum_{j \in \overline{F}} H(|W^* \cap P_j|) \\
	\left. score_{PAV}(W^*) \right| \, (P_i, P_F, P_{\overline{F}}) &= H(|P_i \cap W^*|) + \sum_{j \in F} H(|W^* \cap P_j|) + \sum_{j \in \overline{F}} H(|W^* \cap P_j|)
\end{align*}

We will decompose the various components separately. The contribution from the voters in $\overline{F}$ is:
\begin{align*}
	\sum_{j \in \overline{F}} H(|W^* \cap P_j|) &= \min(x', x'') \cdot H(k) + \big(\max(x', x'') - \min(x', x'')\big) \cdot H(k-1) \\
	&\quad+ \big(f' - \max(x', x'')\big) \cdot H(k-2) \\
	&= \min(x', x'') \cdot \big(H(k) - H(k-1)\big) + \max(x', x'') \cdot \big(H(k-1) - H(k-2)\big) \\
	&\quad + f' \cdot H(k-2) + f' \cdot \frac{1}{k-1} - f' \cdot \frac{1}{k-1} \\
	&= \min(x', x'') \cdot \frac{1}{k} + \max(x', x'') \cdot \frac{1}{k-1} + f' \cdot \left(H(k-2) + \frac{1}{k-1}\right) - f' \cdot \frac{1}{k-1} \\
	&= \min(x', x'') \cdot \frac{1}{k} + f' \cdot H(k-1) - \frac{1}{k-1}\big(f' - \max(x', x'')\big)
\end{align*}

Recall that the contribution of $\overline{F}$ to the score of $W'$ was $f' \cdot H(k-1) + x' \cdot \frac{1}{k}$, and the contribution of $\overline{F}$ to the score of $W''$ was $f' \cdot H(k-1) + x'' \cdot \frac{1}{k}$.

If $x'' > x'$, then the contribution of $\overline{F}$ to the score of $W^*$ is $x' \cdot \frac{1}{k} + f' \cdot H(k-1) - \frac{1}{k-1}(f' - x'')$. Since $x'' \le f'$, we can state that this contribution is at most $x' \cdot \frac{1}{k} + f' \cdot H(k-1)$. This means it is less than or equal to the contribution of $\overline{F}$ to the score of $W'$, and it is strictly less than the contribution of $\overline{F}$ to the score of $W''$ (since $x'' > x'$).

If $x' > x''$, then the contribution of $\overline{F}$ to the score of $W^*$ is $x'' \cdot \frac{1}{k} + f' \cdot H(k-1) - \frac{1}{k-1}(f' - x')$. Since $x' \le f'$, we can state that this contribution is at most $x'' \cdot \frac{1}{k} + f' \cdot H(k-1)$. This means it is less than or equal to the contribution of $\overline{F}$ to the score of $W''$, and it is strictly less than the contribution of $\overline{F}$ to the score of $W'$ (since $x' > x''$).

If $x'' = x'$ (meaning $x'' - x' = 0$), in this case we previously defined that we explicitly choose $x'' = 0$ and $x' = 0$. Therefore, we can state that the contribution of $\overline{F}$ to the score of $W^*$ is $f' \cdot H(k-1) - \frac{1}{k-1} \cdot f'$. This means this contribution is strictly less than both the contribution of $\overline{F}$ to the score of $W''$ and the contribution of $\overline{F}$ to the score of $W'$.

Note that if $W^*$ wins under the profile $(P_i, P_F, P_{\overline{F}})$, this does not interfere with our proof, because $c' \in W^*$; meaning that $c'$ is in the winning committee under truthful reporting, which is exactly what we want to prove.

We now examine the side of $(A_i, P_F)$. We divide the analysis according to the cases with which we originally selected the candidates, and we will prove that in all of them, $W''$ defeats $W^*$ under the profile $(A_i, P_F, P_{\overline{F}})$.

In Case 1 and Case 3.1, assume by contradiction that under the profile $(A_i, P_F, P_{\overline{F}})$, the committee $W^*$ defeats the committee $W''$. Since we have already proven that under this same profile $W''$ defeats $W'$, then under this profile $W^*$ must also defeat $W'$. Since we have already seen that from $P_{\overline{F}}$ the score of $W'$ is greater than or equal to that of $W^*$, this implies that even under the profile $(A_i, P_F)$, $W^*$ defeats $W'$. But since in these cases we can state that $W_{F, A_i} = W'$, and since $W_{F, A_i}$ is the winning committee in $(A_i, P_F)$, it is clear by definition that $W'$ will defeat any other committee under the profile $(A_i, P_F)$, including our $W^*$. We have reached a contradiction, and therefore we have proven that $W''$ will also defeat $W^*$ under the profile $(A_i, P_F, P_{\overline{F}})$.

In Case 2 and Case 3.2, it is even simpler. In these cases, we can state that $W_{F, A_i} = W''$, and since $W_{F, A_i}$ is the winning committee in $(A_i, P_F)$, it is clear that $W''$ will receive from $(A_i, P_F)$ a score greater than or equal to any other committee, including our $W^*$. Therefore, since we have seen that in all cases $W''$ will receive from $P_{\overline{F}}$ a score that is at least greater than or equal to the score $W^*$ receives from it, and since in the event of a tie $W_{F, A_i}$ is lexicographically stronger, we can state that $W''$ will also defeat $W^*$ under the voting profile $(A_i, P_F, P_{\overline{F}})$.

\bigskip

We now examine all other committees. Any additional committee we have not examined contains at least one candidate who is not in the set $\{c_1, \ldots, c_{k-1}, c'', c'\}$. We want to prove that no such committee that does not contain $c'$ will win under the profile $(P_i, P_F, P_{\overline{F}})$, and that no such committee that contains $c'$ will win under the profile $(A_i, P_F, P_{\overline{F}})$.

To prove this, we will start from a specific committee that is desirable in that profile, and we will swap candidates one by one, where each time we remove a candidate from the committee from the set $\{c_1, \ldots, c_{k-1}, c'', c'\}$, and we add a candidate to the committee, whom we denote by $c^{+} \in C \setminus \{c_1, \ldots, c_{k-1}, c'', c'\}$. We will prove that in each such step the score does not increase, and therefore ultimately this theoretical committee will not be able to win.

Before that, we will show that any removal of a candidate from the set $\{c_1, \ldots, c_{k-1}\}$ in exchange for adding a candidate from the set $C \setminus \{c_1, \ldots, c_{k-1}, c'', c'\}$ can only decrease the score, because we will use this fact several times.

A candidate from the set $\{c_1, \ldots, c_{k-1}\}$ receives votes from all of $\overline{F}$, while the candidate we add, $c^{+}$, from the set $C \setminus \{c_1, \ldots, c_{k-1}, c'', c'\}$, receives no votes at all from $\overline{F}$. Therefore, this loss will decrease our score by at least $\frac{1}{k}$ from each of the voters in $\overline{F}$. This means a total minimal reduction of $\frac{1}{k} \cdot f'$. On the other hand, the candidate $c^{+}$ may receive a score of at most $1$ from everyone in the set $F \cup \{i\}$ (regardless of whether their ballot is $(P_i, P_F)$ or $(A_i, P_F)$), and the removal of the candidate from the set $\{c_1, \ldots, c_{k-1}\}$ might not decrease our score at all; meaning, the maximal score addition from the set $F \cup \{i\}$ would be $1 \cdot (f+1)$. This means the maximal addition we could get from this kind of swap would be $(f+1) - \frac{1}{k} \cdot f'$. We will show that it will not increase the score at all:

\begin{align*}
	(f+1) - \frac{1}{k} \cdot f' &= \frac{1}{k} \cdot \big(k(f+1) - f'\big) \\
	&= \frac{1}{k} \cdot \big(k \cdot f + k - (n - f - 1)\big) \\
	&= \frac{1}{k} \cdot (k \cdot f + k - n + f + 1) \\
	&= \frac{1}{k} \cdot \big((k+1) \cdot f + k - n + 1\big) \\
	&= \frac{1}{k} \cdot \left((k+1) \cdot \left(\left\lfloor \frac{n}{k+1} \right\rfloor - 1\right) + k - n + 1\right) \\
	&\le \frac{1}{k} \cdot \left((k+1) \cdot \left(\frac{n}{k+1} - 1\right) + k - n + 1\right) \\
	&= \frac{1}{k} \cdot \left((k+1) \cdot \left(\frac{n}{k+1}\right) - (k+1) + k - n + 1\right) \\
	&= \frac{1}{k} \cdot (n - k - 1 + k - n + 1) \\
	&= \frac{1}{k} \cdot 0 \\
	&= 0
\end{align*}

This means we have shown that swapping a candidate from the set $\{c_1, \ldots, c_{k-1}\}$ in exchange for adding a candidate from the set $C \setminus \{c_1, \ldots, c_{k-1}, c'', c'\}$ yields, in the worst-case scenario, a score addition of $0$. 
In practice, this swap will even yield a strictly negative score addition in all cases, because for the score of $c^{+}$ from the set $F \cup \{i\}$ to truly reach $1 \cdot (f+1)$, $c^{+}$ would need to be the only candidate that the entire set voted for, at least under the profile $(A_i, P_F)$; in such a situation, it is clear they would be selected for $W_{F, A_i}$, and therefore cannot be from the set $C \setminus \{c_1, \ldots, c_{k-1}, c'', c'\}$. 
Thus, such a swap will strictly decrease the score. 

\medskip

We now begin to prove that no committee that does not contain $c'$ will win under the profile $(P_i, P_F, P_{\overline{F}})$. 

If there is such a committee, we divide it into two cases: either it contains $c''$ or it does not contain $c''$. 

If it contains $c''$, we start from the committee $W''$, which we have already seen is defeated by $W'$ under the profile $(P_i, P_F, P_{\overline{F}})$, and we swap candidates one by one. 
Since the committee we are examining does not contain $c'$ and does contain $c''$, any removal of a candidate from $W''$ will be of a candidate from the set $\{c_1, \ldots, c_{k-1}\}$, and any candidate addition will be an addition of a candidate from the set $C \setminus \{c_1, \ldots, c_{k-1}, c'', c'\}$. 
We have seen that such a swap only decreases the score. 

If it does not contain $c''$, the first step will be the removal of $c'$ itself from $W'$ and replacing it with another candidate $c^{+} \in C \setminus \{c_1, \ldots, c_{k-1}, c'', c'\}$. 
The subsequent steps will be swapping a candidate from $\{c_1, \ldots, c_{k-1}\}$ with a candidate from the set $C \setminus \{c_1, \ldots, c_{k-1}, c'', c'\}$, from which we can only lose as stated above; therefore, we will examine only the first step. 

It is clear that from $P_{\overline{F}}$ we can only "gain", because there are no votes for $c^{+}$ there, and there might be votes for $c'$. 
We will show that even from $(P_i, P_F)$ we will not lose in the various cases. 

Let us denote the committee after the first swap by $W^{**}$, and note that it is essentially $\{c_1, \ldots, c_{k-1}, c^{+}\}$. 

In Case 1 and Case 3.1, $W_{F, A_i} = W'$, meaning it wins for $(A_i, P_F)$. 
$W^{**}$ has a certain score in $(A_i, P_F)$, and it is not greater than the score of $W'$. 
If we move to $(P_i, P_F)$, the portion of the score each of the committees receives from $P_F$ will not change. 
As for the portion from $P_i$, the score of $W^{**}$ can increase from a minimum of $H(l)$ to a maximum of $H(t+1)$. 
Since the score of $W'$ already increases from $H(l)$ to $H(t+1)$ due to the exact same change, the score of $W^{**}$ will not increase by more than that of $W'$. 
Therefore, $W'$ will defeat $W^{**}$ also under the profile $(P_i, P_F)$ in these cases. 

In Case 2 and Case 3.2, if $c^{+} \in G_i$, then if $W^{**}$ were to defeat $W'$ under the profile $(P_i, P_F)$, then $c^{+}$ should have been selected in these cases instead of $c'$ exactly according to the mechanism by which we chose $c'$. 
If $c^{+} \notin G_i$, then it is indeed possible that $W^{**}$ defeats $W'$ under the profile $(P_i, P_F)$. 
Therefore, we will step back slightly in the proof, and we will prove that in this case $W'$ defeats $W^{**}$ under the profile $(P_i, P_F, P_{\overline{F}})$. 
We will need significantly more detail for this. 

\medskip

Recall the notations $s_{c'}$ and $s_{c''}$. Similarly, we define $s_{c^{+}}$ as the marginal score that $c^{+}$ contributes to the score of $W^{**}$ from the subset $F$, meaning $s_{c^{+}} \coloneqq \sum_{j \in F} H(|P_j \cap W^{**}|) - s_F$.

Recall the notations $p''$ and $a''$. Similarly, we define $p^{+}$ and $a^{+}$ to be the marginal contribution of candidate $c^{+}$ from voter $i$ to the score of $W^{**}$. Specifically, if $c^{+} \in A_i$, then $p^{+} = \frac{1}{t+1}$ and $a^{+} = \frac{1}{l+1}$; otherwise (i.e., if $c^{+} \notin P_i$), $p^{+} = 0$ and $a^{+} = 0$.

Note that:
\[ \left. score_{PAV}(W'') \right| \, (A_i, P_F) = H(|A_i \cap W''|) + \sum_{j \in F} H(|W'' \cap P_j|) = H(l) + a'' + s_F + s_{c''} \]

And that:
\[ \left. score_{PAV}(W^{**}) \right| \, (A_i, P_F) = H(|A_i \cap W^{**}|) + \sum_{j \in F} H(|W^{**} \cap P_j|) = H(l) + a^{+} + s_F + s_{c^{+}} \]

Since in these cases $W_{F, A_i} = W''$, then $W''$ defeats any committee under the profile $(A_i, P_F)$, and in particular the committee $W^{**}$. Therefore, $H(l) + a'' + s_F + s_{c''} \ge H(l) + a^{+} + s_F + s_{c^{+}}$, which implies that $a'' + s_{c''} \ge a^{+} + s_{c^{+}}$, meaning that $s_{c^{+}} \le s_{c''} + a'' - a^{+}$. Furthermore, if the inequality is tight, then $W''$ is lexicographically preferred over $W^{**}$.

Recall that the score of $W'$ under the profile $(P_i, P_F, P_{\overline{F}})$ resulted in:
\[ \left. score_{PAV}(W') \right| \, (P_i, P_F, P_{\overline{F}}) = H(t) + \frac{1}{t+1} + s_F + s_{c'} + f' \cdot H(k-1) + x' \cdot \frac{1}{k} \]

And we calculate the score of $W^{**}$ under the profile $(P_i, P_F, P_{\overline{F}})$:
\begin{align*}
	\left. score_{PAV}(W^{**}) \right| \, (P_i, P_F, P_{\overline{F}}) &= H(|P_i \cap W^{**}|) + \sum_{j \in F} H(|W^{**} \cap P_j|) + \sum_{j \in \overline{F}} H(|W^{**} \cap P_j|) \\
	&= H(t) + p^{+} + s_F + s_{c^{+}} + f' \cdot H(k-1)
\end{align*}

This means the gap between $W'$ and $W^{**}$ under the profile $(P_i, P_F, P_{\overline{F}})$ is $\frac{1}{t+1} - p^{+} + s_{c'} - s_{c^{+}} + x' \cdot \frac{1}{k}$ in favor of $W'$.

Recall that we defined that in the situation where $W'$ is lexicographically stronger than $W''$, then $(x'' - x') = \lfloor k(s_{c'} - s_{c''} - a'') \rfloor + 1$, and in the situation where $W''$ is lexicographically stronger than $W'$, then $(x'' - x') = \lceil k(s_{c'} - s_{c''} - a'') \rceil$.
This means that in the situation where $W'$ is lexicographically stronger than $W''$, we can state $x' = x'' - \lfloor k(s_{c'} - s_{c''} - a'') \rfloor - 1$, and in the situation where $W''$ is lexicographically stronger than $W'$, we can state $x' = x'' - \lceil k(s_{c'} - s_{c''} - a'') \rceil$.

We examine the gap in the situation where $W'$ is lexicographically stronger than $W''$:
\begin{align*}
	&\frac{1}{t+1} - p^{+} + s_{c'} - s_{c^{+}} + x' \cdot \frac{1}{k} \\
	&= \frac{1}{t+1} - p^{+} + s_{c'} - s_{c^{+}} + \big(x'' - \lfloor k(s_{c'} - s_{c''} - a'') \rfloor - 1\big) \cdot \frac{1}{k} \\
	&\ge \frac{1}{t+1} - p^{+} + s_{c'} - s_{c^{+}} + \big(x'' - k(s_{c'} - s_{c''} - a'') - 1\big) \cdot \frac{1}{k} \\
	&= \frac{1}{t+1} - p^{+} + s_{c'} - s_{c^{+}} + x'' \cdot \frac{1}{k} - (s_{c'} - s_{c''} - a'') - \frac{1}{k} \\
	&= \frac{1}{t+1} - p^{+} + s_{c'} - s_{c^{+}} + x'' \cdot \frac{1}{k} - s_{c'} + s_{c''} + a'' - \frac{1}{k} \\
	&= \frac{1}{t+1} - p^{+} - s_{c^{+}} + x'' \cdot \frac{1}{k} + s_{c''} + a'' - \frac{1}{k} \\
	&\ge \frac{1}{t+1} - p^{+} - (s_{c''} + a'' - a^{+}) + x'' \cdot \frac{1}{k} + s_{c''} + a'' - \frac{1}{k} \\
	&= \frac{1}{t+1} - p^{+} - s_{c''} - a'' + a^{+} + x'' \cdot \frac{1}{k} + s_{c''} + a'' - \frac{1}{k} \\
	&= \frac{1}{t+1} - p^{+} + a^{+} + x'' \cdot \frac{1}{k} - \frac{1}{k} \\
	&\ge \frac{1}{t+1} - p^{+} + a^{+} - \frac{1}{k}
\end{align*}

If $c^{+} \notin P_i$, then the expression becomes $\frac{1}{t+1} - p^{+} + a^{+} - \frac{1}{k} = \frac{1}{t+1} - 0 + 0 - \frac{1}{k} = \frac{1}{t+1} - \frac{1}{k}$. Since $t \le k-1$, it follows that $t+1 \le k$, and thus $\frac{1}{t+1} \ge \frac{1}{k}$. Therefore, $\frac{1}{t+1} - \frac{1}{k} \ge 0$, meaning we obtained a non-negative expression.

If $c^{+} \in A_i$, then the expression becomes $\frac{1}{t+1} - p^{+} + a^{+} - \frac{1}{k} = \frac{1}{t+1} - \frac{1}{t+1} + \frac{1}{l+1} - \frac{1}{k} = \frac{1}{l+1} - \frac{1}{k}$. Since it is known that $l \le k-1$, it follows that $l+1 \le k$, and thus $\frac{1}{l+1} \ge \frac{1}{k}$. Therefore, $\frac{1}{l+1} - \frac{1}{k} \ge 0$, meaning we obtained a non-negative expression here as well.

Note that during the calculation we used the inequality $s_{c^{+}} \le s_{c''} + a'' - a^{+}$, and we know that if it holds tightly, then $W''$ is lexicographically preferred over $W^{**}$. Now, if it does not hold tightly, we can replace the greater-than-or-equal sign in our notation with a strictly greater-than sign, and therefore we obtained a strictly positive gap in favor of $W'$ over $W^{**}$, so $W'$ will defeat $W^{**}$ also under the truthful profile $(P_i, P_F, P_{\overline{F}})$. If it does hold tightly, then $W''$ is lexicographically preferred over $W^{**}$, and we are in a situation where $W'$ is lexicographically stronger than $W''$; thus, we can state that $W'$ is lexicographically preferred over $W^{**}$. Therefore, since the gap is non-negative, in this case too we can state that $W'$ will defeat $W^{**}$ under the truthful profile $(P_i, P_F, P_{\overline{F}})$.

We now examine the gap in the situation where $W''$ is lexicographically stronger than $W'$:
\begin{align*}
	&\frac{1}{t+1} - p^{+} + s_{c'} - s_{c^{+}} + x' \cdot \frac{1}{k} \\
	&= \frac{1}{t+1} - p^{+} + s_{c'} - s_{c^{+}} + \big(x'' - \lceil k(s_{c'} - s_{c''} - a'') \rceil\big) \cdot \frac{1}{k} \\
	&> \frac{1}{t+1} - p^{+} + s_{c'} - s_{c^{+}} + \big(x'' - k(s_{c'} - s_{c''} - a'') - 1\big) \cdot \frac{1}{k} \\
	&= \frac{1}{t+1} - p^{+} + s_{c'} - s_{c^{+}} + x'' \cdot \frac{1}{k} - (s_{c'} - s_{c''} - a'') - \frac{1}{k} \\
	&= \frac{1}{t+1} - p^{+} + s_{c'} - s_{c^{+}} + x'' \cdot \frac{1}{k} - s_{c'} + s_{c''} + a'' - \frac{1}{k} \\
	&= \frac{1}{t+1} - p^{+} - s_{c^{+}} + x'' \cdot \frac{1}{k} + s_{c''} + a'' - \frac{1}{k} \\
	&\ge \frac{1}{t+1} - p^{+} - (s_{c''} + a'' - a^{+}) + x'' \cdot \frac{1}{k} + s_{c''} + a'' - \frac{1}{k} \\
	&= \frac{1}{t+1} - p^{+} - s_{c''} - a'' + a^{+} + x'' \cdot \frac{1}{k} + s_{c''} + a'' - \frac{1}{k} \\
	&= \frac{1}{t+1} - p^{+} + a^{+} + x'' \cdot \frac{1}{k} - \frac{1}{k} \\
	&\ge \frac{1}{t+1} - p^{+} + a^{+} - \frac{1}{k}
\end{align*}

We obtained the same expression, which we showed in the previous case to be non-negative, and we saw that the gap is strictly greater than it. Therefore, since the gap is strictly positive, we can state here as well that $W'$ will defeat $W^{**}$ under the truthful profile $(P_i, P_F, P_{\overline{F}})$.

\medskip

Now, after having proven that no committee that does not contain $c'$ will win under the profile $(P_i, P_F, P_{\overline{F}})$, we will prove that no committee that contains $c'$ will win under the profile $(A_i, P_F, P_{\overline{F}})$.

If there is such a committee, we divide it into two cases: either it contains $c''$ or it does not contain $c''$.

If it does not contain $c''$, we start from the committee $W'$, which we have already seen is defeated by $W''$ under the profile $(A_i, P_F, P_{\overline{F}})$, and we swap candidates one by one. Since the committee we are examining contains $c'$ and does not contain $c''$, any removal of a candidate from $W'$ will be of a candidate from the set $\{c_1, \ldots, c_{k-1}\}$, and any candidate addition will be an addition of a candidate from the set $C \setminus \{c_1, \ldots, c_{k-1}, c'', c'\}$. We have seen that such a swap only decreases the score.

If it does contain $c''$, we start from the committee $W^*$, which we have already seen is defeated by $W''$ under the profile $(A_i, P_F, P_{\overline{F}})$, and we swap candidates one by one. Since the committee we are examining contains $c'$ and also contains $c''$, any removal of a candidate from $W^*$ will be of a candidate from the set $\{c_1, \ldots, c_{k-1}\}$, and any candidate addition will be an addition of a candidate from the set $C \setminus \{c_1, \ldots, c_{k-1}, c'', c'\}$. We have seen that such a swap only decreases the score.

(Note that $W^*$ is not a single committee, but there could be several such committees depending on who was the first we removed from among $\{c_1, \ldots, c_{k-1}\}$, but our earlier proof applied to each of them accordingly.)

Thus, we have seen that $c' \notin \mathcal{R}(A_i, P_F, P_{\overline{F}})$ and $c' \in \mathcal{R}(P_i, P_F, P_{\overline{F}})$. Since we defined that $c' \in P_i$, we have essentially found a case where $\mathcal{R}(A_i, P_F, P_{\overline{F}}) \cap P_i \supseteq \mathcal{R}(P_i, P_F, P_{\overline{F}}) \cap P_i$ does not hold, meaning the manipulation is not safe.

\end{proof}

\begin{claim}
	The $d$-RPAV rule, for $d \ge 1$, is $\frac{1}{d}$-proportional.
\end{claim}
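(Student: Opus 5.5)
Suppose $\left\lceil \frac{dn}{k} \right\rceil$ voters have ballot exactly $\{c\}$, and let $W$ be the winning committee. I would argue by contradiction with a single-swap argument, as in the earlier proofs: assume $c \notin W$ and exhibit a committee $W \setminus \{b\} \cup \{c\}$, for some $b \in W$, whose $d$-RPAV score is strictly larger than that of $W$. Because the inequality is strict, the lexicographic tie-breaking never matters, and $W$ cannot be the winner.

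Set $q = \left\lceil \frac{dn}{k} \right\rceil$ and let $Q$ be the set of voters whose ballot is exactly $\{c\}$, so $|Q| = q$. We need $q \le n$; for $d>k$ it can exceed $n$, and then the claim is vacuous. There are three ingredients.

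\emph{Gain from adding $c$.} Each voter in $Q$ has $|P_j \cap W| = 0$, so adding $c$ raises their term from $H_d(0)=0$ to $H_d(1)=1$. This gives a gain of exactly $q$ from $Q$. Any other voter approving $c$ gains a nonnegative amount, since $H_d$ is nondecreasing.

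\emph{Loss from removing $b$.} For $b \in W$, let $\ell(b)$ be the score lost when $b$ is removed. A voter $j$ with $x_j = |P_j \cap W| \ge 1$ loses $\frac{d}{x_j+d-1}$ for each of the $x_j$ members of $W$ they approve. Summing over all $b \in W$, voter $j$ contributes $\frac{x_j d}{x_j+d-1}$. Since $d \ge 1$, we have $x_j \le x_j + d - 1$, so this is at most $d$. Voters in $Q$ contribute $0$. Hence
\[
\sum_{b \in W} \ell(b) \le d(n-q),
\]
and there is some $b \in W$ with $\ell(b) \le \frac{d(n-q)}{k}$. Removing $b$ and adding $c$ affect different voters' terms additively, except for voters approving both; for those, the loss from $b$ is computed with $x_j$ unchanged and $c$ only adds. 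So the net change is at least $q - \ell(b)$.

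\emph{Key inequality.} We need $q > \frac{d(n-q)}{k}$, i.e.\ $qk > d(n-q)$. This follows from $qk \ge dn$ together with $dq > 0$. Thus $q - \ell(b) > 0$, which is the desired contradiction.

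The only delicate step is bounding the total removal loss per voter by $d$. This is exactly where the hypothesis $d \ge 1$ enters, through $\frac{x}{x+d-1} \le 1$. I would state that step carefully and also check the bookkeeping for voters who approve both $b$ and $c$; the rest is routine.
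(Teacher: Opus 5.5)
Your proposal is correct and takes essentially the same approach as the paper. Both proofs sum the single-candidate removal losses over $W$, bound each voter's share by $d$ via $\frac{x}{x+d-1}\le 1$, use pigeonhole to find a cheap $b$, and swap it for $c$ to gain $\lceil dn/k\rceil$ from the $\{c\}$-voters. Your bound $\ell(b)\le d(n-q)/k<q$ and your explicit check of voters approving both $b$ and $c$ are a slightly tidier version of the paper's $s_W<nd$ step.
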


\begin{proof}
	Assume towards a contradiction that there exists a profile $P$ in which a group of voters $N^* \subseteq N$ of size $\left\lceil \frac{dn}{k} \right\rceil$ approve exactly the ballot $\{c\}$, yet $c \notin \mathcal{R}(P)$.
	
	Let $W = \mathcal{R}(P)$. For each $w \in W$, we define its \emph{marginal contribution} as the difference between $score_{d-RPAV}(W)$ and $score_{d-RPAV}(W \setminus \{w\})$ (Note that $score_{d\text{-}RPAV}$ is well-defined for any subset of $C$, regardless of size; in particular, $W \setminus \{w\}$, which has size $k-1$, need not constitute a valid committee, yet its score is still well-defined). Let $s_W$ be the sum of the marginal contributions of all candidates in $W$.
	\medskip
	
	Consider a voter $i \in N \setminus N^*$ and examine their contribution to $s_W$. Let $r_i = |P_i \cap W|$ be the number of candidates that voter $i$ approved who were elected to the winning committee. If $r_i = 0$, it is clear that voter $i$ contributes nothing to $s_W$. Otherwise, if $r_i > 0$, then each candidate in $P_i \cap W$ contributes $\frac{d}{r_i+d-1}$ to $s_W$ through voter $i$'s ballot, and therefore all candidates in $P_i \cap W$ combined contribute to $s_W$ via voter $i$'s vote as follows:
\[
\sum_{w \in P_i \cap W} \frac{d}{r_i+d-1} = |P_i \cap W| \cdot \frac{d}{r_i+d-1} = r_i \cdot \frac{d}{r_i+d-1} \stackrel{(1)}{\le} r_i \cdot \frac{d}{r_i} = d
\]

\noindent \footnotesize
{(1) $r_i + d - 1 \ge r_i$, as $d\ge1$.}
	\normalsize
	\medskip
	
	Thus, we have shown that when $r_i = 0$ the contribution is $0$, and when $r_i > 0$ the contribution is at most $d$; hence, in any case, the contribution of a single voter to $s_W$ does not exceed $d$.
	\medskip
	   
	Since, by our assumption, $c \notin W$, the committee's score from the voters in $N^*$ is zero. Thus, the marginal contributions to $s_W$ are derived exclusively from the group $N \setminus N^*$, and we can bound $s_W$ as follows:
	\[
	s_W \le \sum_{i \in N \setminus N^*} d = |N \setminus N^*| \cdot d = (n - |N^*|) \cdot d < nd
	\]
	
	Since we have shown that $s_W < nd$, by the pigeonhole principle there must exist a candidate $w \in W$ whose marginal contribution is strictly less than:
	\[
	\frac{1}{|W|} \cdot nd = \frac{nd}{k} \le \left\lceil \frac{dn}{k} \right\rceil
	\]
		
	Now, if we remove this candidate $w$ from the winning committee $W$ and add $c$ in their place, we will gain at least the score from all the voters in $N^*$ (who previously contributed nothing to the score). Their total contribution will be:
	\[
	\sum_{i \in N^*} H_d(1) = \left\lceil \frac{dn}{k} \right\rceil \cdot H_d(1) = \left\lceil \frac{dn}{k} \right\rceil \cdot \sum_{j=1}^{1} \frac{d}{j+d-1} = \left\lceil \frac{dn}{k} \right\rceil \cdot \frac{d}{1+d-1} = \left\lceil \frac{dn}{k} \right\rceil \cdot 1 = \left\lceil \frac{dn}{k} \right\rceil
	\]
	We have shown that removing candidate $w$ from $W$ and adding $c$ decreases the score by strictly less than $\left\lceil \frac{dn}{k} \right\rceil$ and increases it by at least $\left\lceil \frac{dn}{k} \right\rceil$. Therefore, this swap strictly increases the overall score of $W$, which contradicts the assumption that $W$ is the winning committee (as it must maximize the score).
\end{proof}

\begin{claim}
	The $d$-RPAV rule is Superset Risk-Avoiding Strategy-Proof under dropping candidates given $f = \left\lfloor \frac{dn}{k+2d-1}\right\rfloor - 1 $.
\end{claim}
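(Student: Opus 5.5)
The plan is to replicate the structure of the proof of the PAV immunity claim (the case $d=1$), replacing $H$ by $H_d$ throughout and recomputing every threshold with the marginal weights $\frac{d}{j+d-1}$ in place of $\frac{1}{j}$.

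We argue by contradiction. Fix $i$, $F$ with $|F| = f = \lfloor \frac{dn}{k+2d-1}\rfloor - 1$, a truthful ballot $P_i$, and $A_i \subsetneq P_i$. Let $G_i = P_i\setminus A_i$. We exhibit a completion $P_{\overline F}$ for which $c'\in\mathcal R(P_i,P_F,P_{\overline F})$ but $c'\notin \mathcal R(A_i,P_F,P_{\overline F})$, for some $c'\in G_i$. This violates Condition 1 of Definition \ref{def:rasp_dropping_f}.

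\textbf{Construction.} Exactly as in Step 2 of the PAV proof:
\begin{itemize}
\item compute $W_{F,A_i}$ by running $d$-RPAV on $F\cup\{i\}$ with ballot $A_i$;
\item split into Cases 1, 2, 3.1 and 3.2;
\item choose $c'\in G_i$, $c''\notin G_i$ and $C_{K-1}=\{c_1,\dots,c_{k-1}\}$ by the same minimal-removal and maximal-addition rules, now measured with $H_d$.
\end{itemize}
All of $\overline F$ approves $C_{K-1}$. Of these voters, $x'$ also approve $c'$ and $x''$ also approve $c''$, with maximal overlap.

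\textbf{The two gaps.} Every voter in $\overline F$ now gives the $k$-th candidate a marginal weight of $\delta:=\frac{d}{k+d-1}$ instead of $\frac1k$. Voter $i$ gives $c'$ a weight of $\frac{d}{t+d}$, and $a''$, $p''$ are defined analogously. The two gaps become:
\begin{itemize}
\item $(x''-x')\delta + a''+s_{c''}-s_{c'}$ in favour of $W''$ under $A_i$;
\item $\frac{d}{t+d}-p''+s_{c'}-s_{c''}-(x''-x')\delta$ in favour of $W'$ under $P_i$.
\end{itemize}
Set $x''-x' = \lfloor (s_{c'}-s_{c''}-a'')/\delta\rfloor+1$ or the corresponding ceiling, depending on the lexicographic order. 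The residual is then $\frac{d}{t+d}-p''+a''-\delta$, which is nonnegative because $t,l\le k-1$.

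\textbf{Feasibility.} We need $|x''-x'|\le f'$. Since $s_{c'},s_{c''}\le f$ and $a''\le 1$, it suffices that
\[
(f+1)\frac{k+d-1}{d} \lesssim f' = n-f-1 .
\]

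\textbf{Comparison with the remaining committees.}
\begin{itemize}
\item For $W^*$, which drops some $c_j$, the contribution of $\overline F$ is at most $\min(x',x'')\,\delta+f'H_d(k-1)-\frac{d}{k+d-2}\,(f'-\max(x',x''))$. This is dominated by the contributions to $W'$ and $W''$ exactly as in the PAV proof.
\item For swaps bringing in an outsider $c^+$ with no support in $\overline F$, the loss is at least $f'\delta$. The gain from $F\cup\{i\}$ is at most $f+1$, since $H_d(1)=1$.
\item The delicate replacement of $c'$ by $c^{+}\notin G_i$ in Cases 2 and 3.2 goes through verbatim with $\delta$ in place of $\frac1k$. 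It yields the bound $\frac{d}{t+d}-p^++a^+-\delta\ge 0$.
\end{itemize}

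\textbf{Main obstacle: the threshold arithmetic.} The swap argument needs $(f+1) \le f'\cdot\frac{d}{k+d-1}$, i.e. $(f+1)(k+d-1)\le d(n-f-1)$. This rearranges to
\[
(f+1)(k+2d-1)\le dn,
\]
which holds precisely because $f+1=\lfloor\frac{dn}{k+2d-1}\rfloor$. This is the source of the constant $k+2d-1$.

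I then have to recheck the feasibility bound $|x''-x'|\le f'$ with the same inequality. Its upper side is $\frac{f}{\delta}+1\le f'$, which follows from the above. Its lower side $-\frac{f+1}{\delta}$ gives $\ge -f'$ by the identical inequality. The floor and ceiling only cost one unit, and this is absorbed because the bound has slack of at least $d$.

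The strictness argument (the tie case, in which $c^+$ would have been chosen into $W_{F,A_i}$) also needs re-examination, since $\delta$ now generally differs from the marginal weights inside $F$. I expect it to go through as in the PAV proof, but it is where the care must go.
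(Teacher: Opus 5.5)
Your proposal is correct and follows the paper's proof essentially step for step. It uses the same $W_{F,A_i}$-based case split and choice of $c'$, $c''$, $C_{K-1}$, the same calibration of $x''-x'$ with $\delta=\frac{d}{k+d-1}$, and the same key inequality $(f+1)(k+2d-1)\le dn$ for both feasibility and the outsider-swap bound. The paper handles the strictness step you flag exactly as in the PAV case, and your concern about $\delta$ is unfounded: equality in the swap bound forces every voter of $F\cup\{i\}$ (with $i$'s ballot in that profile) to approve $c^+$ and nothing in the current committee, which contradicts the optimality of $W_{F,A_i}$ regardless of $d$.
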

	
\begin{proof}
	Assume by contradiction that the mechanism is vulnerable to the superset risk-avoiding strategy under dropping candidates given $f = \left\lfloor \frac{dn}{k+2d-1}\right\rfloor - 1 $. We will show that the conditions for vulnerability cannot hold.
		
	By our assumption, there exists a voter $i \in N$ and a subset of voters $F \subseteq N \setminus \{i\}$ with $|F| = \left\lfloor \frac{dn}{k+2d-1}\right\rfloor - 1 $, such that for voter $i$ there exists a truthful ballot $P_i$ and a manipulated ballot $A_i \subsetneq P_i$. We will demonstrate that Condition 1 of Definition \ref{def:rasp_dropping_f} is violated.
	
	Since $A_i \subsetneq P_i$, there must exist at least one candidate who is in $P_i$ but not in $A_i$. Let us denote one such candidate as $c'$. Thus, we have $c' \in P_i$ and $c' \notin A_i$. Additionally, let us denote the size of the remaining set of voters as $|\overline{F}| = f'$ (where $f' = n - f - 1$).
	
\medskip

Now, we will "run" the $d$-RPAV mechanism as if the only voters present were $F \cup \{i\}$, where voter $i$'s ballot is $A_i$. Let us denote the winning committee in this execution by $W_{F, A_i}$, and the score this committee received by $w_{F, A_i}$.

Let $G_i = P_i \setminus A_i$. This is essentially the set from which we selected $c'$ from. We will select $c'$ specifically, with the sole strict requirement being $c' \in G_i$ in all cases. Additionally, we will define various candidates using the notations $c''$ and $c_1, \ldots, c_{k-1}$. For candidate $c''$, we will require that $c'' \notin G_i$. This is necessarily possible because approval ballots must be strict subsets of the candidate set $C$ and cannot be empty, ensuring there are always candidates not in $G_i$. Furthermore, since $|W_{F, A_i}| = k$ and by definition $1 \le k < m$, there are necessarily some candidates inside $W_{F, A_i}$ and some candidates outside of $W_{F, A_i}$.

We divide the proof into three distinct cases:
\begin{itemize}
	\item \textbf{Case 1:} $W_{F, A_i} \subseteq G_i$.
	\item \textbf{Case 2:} $W_{F, A_i} \cap G_i = \emptyset$.
	\item \textbf{Case 3:} All other scenarios. That is, the winning committee $W_{F, A_i}$ contains both candidates who are in $G_i$ and candidates who are not in $G_i$. Simply put, $W_{F, A_i} \cap G_i \neq \emptyset$ and $W_{F, A_i} \setminus G_i \neq \emptyset$.
\end{itemize}

We further divide Case 3 into two sub-cases:
\begin{itemize}
	\item \textbf{Case 3.1:} $(C \setminus W_{F, A_i}) \cap G_i = \emptyset$.
	\item \textbf{Case 3.2:} $(C \setminus W_{F, A_i}) \cap G_i \neq \emptyset$.
\end{itemize}

Note that in both Case 1 and Case 3.1, it holds that $W_{F, A_i} \cap G_i \neq \emptyset$ and $(C \setminus W_{F, A_i}) \setminus G_i \neq \emptyset$. Similarly, note that in both Case 2 and Case 3.2, it holds that $W_{F, A_i} \setminus G_i \neq \emptyset$ and $(C \setminus W_{F, A_i}) \cap G_i \neq \emptyset$.

First, we define the choices for Case 1 and Case 3.1, where we know that $W_{F, A_i} \cap G_i \neq \emptyset$. We examine each candidate in $W_{F, A_i} \cap G_i$ to determine what happens if we remove only that candidate from the committee $W_{F, A_i}$—specifically, by how much the score of $W_{F, A_i}$ would decrease without them under the profile $(P_F, P_i)$. The candidate whose removal decreases the score by the least amount will be denoted as $c'$. In the event of a tie among several candidates, we choose the one that is lexicographically "weakest". Note that since we selected from $W_{F, A_i} \cap G_i$, it is clear that $c' \in G_i$, as required.

Since $W_{F, A_i}$ contains exactly $k$ candidates, we denote the remaining candidates, $W_{F, A_i} \setminus \{c'\}$, as $c_1, \ldots, c_{k-1}$.

Next, we examine the remaining candidates who are neither in $W_{F, A_i}$ nor in $G_i$, to determine which candidate would contribute the most to the $d$-RPAV score if added to the committee $\{c_1, \ldots, c_{k-1}\}$ when our voters are $F \cup \{i\}$ and voter $i$'s ballot is $A_i$. We denote this candidate as $c''$. In other words, $c''$ is the candidate from $(C \setminus W_{F, A_i}) \setminus G_i$ whose addition to $W_{F, A_i} \setminus \{c'\}$ contributes the most to the committee's score under the profile $(P_F, A_i)$. In the event of a tie, we select the one that is lexicographically "strongest". Since in these cases it holds that $(C \setminus W_{F, A_i}) \setminus G_i \neq \emptyset$, such a candidate clearly exists, and according to the set from which they were chosen, $c'' \notin G_i$.

We now handle Case 2 and Case 3.2, where we know that $W_{F, A_i} \setminus G_i \neq \emptyset$. We examine each candidate in $W_{F, A_i} \setminus G_i$ to determine what happens if we remove only that candidate from the committee $W_{F, A_i}$, observing by how much the score of $W_{F, A_i}$ would decrease without them under the profile $(P_F, A_i)$. The candidate whose removal decreases the score by the least amount will be denoted as $c''$. If there is a tie, we choose the one that is lexicographically "weakest". Note that since we selected from $W_{F, A_i} \setminus G_i$, it is clear that $c'' \notin G_i$.

Since $W_{F, A_i}$ contains exactly $k$ candidates, we denote the remaining candidates, $W_{F, A_i} \setminus \{c''\}$, as $c_1, \ldots, c_{k-1}$.

Next, we examine the candidates in $G_i$ who are not in $W_{F, A_i}$, to determine which candidate would contribute the most to the $d$-RPAV score if added to the committee $\{c_1, \ldots, c_{k-1}\}$ when our voters are $F \cup \{i\}$ and voter $i$'s ballot is now $P_i$. We denote this candidate as $c'$. In other words, $c'$ is the candidate from $(C \setminus W_{F, A_i}) \cap G_i$ whose addition to $W_{F, A_i} \setminus \{c''\}$ contributes the most to the committee's score under the profile $(P_F, P_i)$. In the event of a tie, we select the one that is lexicographically "strongest". Since in these cases it holds that $(C \setminus W_{F, A_i}) \cap G_i \neq \emptyset$, such a candidate clearly exists, and according to the set from which they were chosen, $c' \in G_i$, as required.

Thus, across all cases, we have explicitly defined which candidate from $G_i$ is chosen as $c'$, and we have selected the candidates $c_1, \ldots, c_{k-1}$ as well as the candidate $c''$. From here, we will proceed uniformly with our selection for the various cases.

\medskip
	
	Let $W' = \{c_1, \ldots, c_{k-1}, c'\}$ and $W'' = \{c_1, \ldots, c_{k-1}, c''\}$. Furthermore, let us denote $C_{K-1} = \{c_1, \ldots, c_{k-1}\}$.

	Let $s_F$ denote the score contributed by the subset $F$ to the candidates in $C_{K-1}$; that is, $s_F \coloneqq \sum_{j \in F} H_d(|P_j \cap C_{K-1}|)$.
	
	Let $s_{c'}$ denote the marginal score that $c'$ contributes to the score of $W'$ from the subset $F$. That is, $s_{c'} \coloneqq \sum_{j \in F} H_d(|P_j \cap W'|) - s_F$. Similarly, let $s_{c''}$ denote the marginal score that $c''$ contributes to the score of $W''$ from the subset $F$, meaning $s_{c''} \coloneqq \sum_{j \in F} H_d(|P_j \cap W''|) - s_F$.
	
	Let $l$ denote the number of candidates in $C_{K-1}$ that are present in $A_i$, and let $t$ denote the number of candidates in $C_{K-1}$ that are present in $P_i$. Naturally, it holds that $0 \le l \le k-1$ and $0 \le t \le k-1$.
	
	We wish to denote the marginal contribution of candidate $c''$ to the score of $W''$ from voter $i$. Since we do not know a priori whether $c'' \in A_i$ or $c'' \notin P_i$, we denote the marginal contribution of $c''$ under $P_i$ as $p''$, and the marginal contribution of $c''$ under $A_i$ as $a''$. Specifically, if $c'' \in A_i$, then $p'' = \frac{d}{t+d}$ and $a'' = \frac{d}{l+d}$. Otherwise (i.e., if $c'' \notin P_i$), $p'' = 0$ and $a'' = 0$.
	
	We choose the profile $P_{\overline{F}}$ such that all voters in $\overline{F}$ vote for $\{c_1, \ldots, c_{k-1}\}$. Among them, $x'$ voters will additionally vote for $c'$, and $x''$ voters will additionally vote for $c''$, where we aim to maximize the overlap between these $x'$ and $x''$ voters. Explicitly, $\min(x', x'')$ voters from $\overline{F}$ will vote for $\{c_1, \ldots, c_{k-1}, c', c''\}$. Additionally, $\max(x'-x'', 0)$ voters will vote for $\{c_1, \ldots, c_{k-1}, c'\}$, $\max(x''-x', 0)$ voters will vote for $\{c_1, \ldots, c_{k-1}, c''\}$, and the remainder, which is $f' - \max(x', x'')$ voters, will vote strictly for $\{c_1, \ldots, c_{k-1}\}$. We will define the exact values of $x'$ and $x''$ later, but note that they must satisfy $0 \le x' \le f'$ and $0 \le x'' \le f'$.
	
	We now examine the scores of the potential winning committees in each scenario.

	Under the manipulated profile $(A_i, P_F, P_{\overline{F}})$:
	\begin{align*}
		\left. score_{d-RPAV}(W'') \right| \, (A_i, P_F, P_{\overline{F}}) &= H_d(|A_i \cap W''|) + \sum_{j \in F} H_d(|P_j \cap W''|) + \sum_{j \in \overline{F}} H_d(|P_j \cap W''|) \\
		&= H_d(l) + a'' + s_F + s_{c''} + (f' - x'') \cdot H_d(k-1) + x'' \cdot H_d(k) \\
		&= H_d(l) + a'' + s_F + s_{c''} + f' \cdot H_d(k-1) + x'' \cdot \big(H_d(k) - H_d(k-1)\big) \\
		&= H_d(l) + a'' + s_F + s_{c''} + f' \cdot H_d(k-1) + x'' \cdot \frac{d}{k+d-1}
	\end{align*}
	
	\begin{align*}
		\left. score_{d-RPAV}(W') \right| \, (A_i, P_F, P_{\overline{F}}) &= H_d(|A_i \cap W'|) + \sum_{j \in F} H_d(|P_j \cap W'|) + \sum_{j \in \overline{F}} H_d(|P_j \cap W'|) \\
		&= H_d(l) + s_F + s_{c'} + (f' - x') \cdot H_d(k-1) + x' \cdot H_d(k) \\
		&= H_d(l) + s_F + s_{c'} + f' \cdot H_d(k-1) + x' \cdot \big(H_d(k) - H_d(k-1)\big) \\
		&= H_d(l) + s_F + s_{c'} + f' \cdot H_d(k-1) + x' \cdot \frac{d}{k+d-1}
	\end{align*}
	
	Thus, we obtain a gap of $(x'' - x') \cdot \frac{d}{k+d-1} + a'' + s_{c''} - s_{c'}$ in favor of $W''$ over $W'$.
	
	Now, we evaluate the committees under the truthful profile $(P_i, P_F, P_{\overline{F}})$:
	\begin{align*}
		\left. score_{d-RPAV}(W'') \right| \, (P_i, P_F, P_{\overline{F}}) &= H_d(|P_i \cap W''|) + \sum_{j \in F} H_d(|P_j \cap W''|) + \sum_{j \in \overline{F}} H_d(|P_j \cap W''|) \\
		&= H_d(t) + p'' + s_F + s_{c''} + (f' - x'') \cdot H_d(k-1) + x'' \cdot H_d(k) \\
		&= H_d(t) + p'' + s_F + s_{c''} + f' \cdot H_d(k-1) + x'' \cdot \big(H_d(k) - H_d(k-1)\big) \\
		&= H_d(t) + p'' + s_F + s_{c''} + f' \cdot H_d(k-1) + x'' \cdot \frac{d}{k+d-1}
	\end{align*}
	
	\begin{align*}
		\left. score_{d-RPAV}(W') \right| \, (P_i, P_F, P_{\overline{F}}) &= H_d(|P_i \cap W'|) + \sum_{j \in F} H_d(|P_j \cap W'|) + \sum_{j \in \overline{F}} H_d(|P_j \cap W'|) \\
		&= H_d(t+1) + s_F + s_{c'} + (f' - x') \cdot H_d(k-1) + x' \cdot H_d(k) \\
		&= H_d(t) + \frac{d}{t+d} + s_F + s_{c'} + f' \cdot H_d(k-1) + x' \cdot \big(H_d(k) - H_d(k-1)\big) \\
		&= H_d(t) + \frac{d}{t+d} + s_F + s_{c'} + f' \cdot H_d(k-1) + x' \cdot \frac{d}{k+d-1}
	\end{align*}
	
	Here, we obtain a gap of $\frac{d}{t+d} - p'' + s_{c'} - s_{c''} - (x'' - x') \cdot \frac{d}{k+d-1}$ in favor of $W'$ over $W''$.

	We will now divide the proof into cases and determine the values of $x'$ and $x''$ for each.
	
	First, we handle the scenario where $W'$ is lexicographically stronger than $W''$. In this situation, the gap $(x'' - x') \cdot \frac{d}{k+d-1} + a'' + s_{c''} - s_{c'}$ must be strictly positive, so that $W''$ defeats $W'$ under the manipulation. And meanwhile, the gap $\frac{d}{t+d} - p'' + s_{c'} - s_{c''} - (x'' - x') \cdot \frac{d}{k+d-1}$ must be non-negative, so that $W'$ defeats $W''$ without the manipulation. To satisfy this, we choose $x'$ and $x''$ such that $x'' - x' = \left\lfloor \frac{k+d-1}{d}(s_{c'} - s_{c''} - a'') \right\rfloor + 1$, and we will show that this holds.
	
	\begin{align*}
		(x'' - x') \cdot \frac{d}{k+d-1} + a'' + s_{c''} - s_{c'} &= \left( \left\lfloor \frac{k+d-1}{d}(s_{c'} - s_{c''} - a'') \right\rfloor + 1 \right) \cdot \frac{d}{k+d-1} + a'' + s_{c''} - s_{c'} \\
		&> \left( \frac{k+d-1}{d}(s_{c'} - s_{c''} - a'') - 1 + 1 \right) \cdot \frac{d}{k+d-1} + a'' + s_{c''} - s_{c'} \\
		&= (s_{c'} - s_{c''} - a'') + a'' + s_{c''} - s_{c'} \\
		&= 0
	\end{align*}
	
	And similarly for the second gap:
	\begin{align*}
		\frac{d}{t+d} - p'' + s_{c'} - s_{c''} - (x'' - x') \cdot \frac{d}{k+d-1} \\
		&= \frac{d}{t+d} - p'' + s_{c'} - s_{c''} \\
		&\quad- \left( \left\lfloor \frac{k+d-1}{d}(s_{c'} - s_{c''} - a'') \right\rfloor + 1 \right) \cdot \frac{d}{k+d-1} \\
		&\ge \frac{d}{t+d} - p'' + s_{c'} - s_{c''} \\
		&\quad- \left( \frac{k+d-1}{d}(s_{c'} - s_{c''} - a'') + 1 \right) \cdot \frac{d}{k+d-1} \\
		&= \frac{d}{t+d} - p'' + s_{c'} - s_{c''} - (s_{c'} - s_{c''} - a'') - \frac{d}{k+d-1} \\
		&= \frac{d}{t+d} - p'' + a'' - \frac{d}{k+d-1}
	\end{align*}
	
	We have shown that the expression required to be strictly positive is indeed positive. For the side required to be non-negative, we obtained the expression $\frac{d}{t+d} - p'' + a'' - \frac{d}{k+d-1}$. Thus, it remains to be shown that this expression itself is non-negative, which we will demonstrate shortly.
	
	Next, we handle the scenario where $W''$ is lexicographically stronger than $W'$. In this situation, the gap $(x'' - x') \cdot \frac{d}{k+d-1} + a'' + s_{c''} - s_{c'}$ must be non-negative, so that $W''$ defeats $W'$ under the manipulation. And meanwhile, the gap $\frac{d}{t+d} - p'' + s_{c'} - s_{c''} - (x'' - x') \cdot \frac{d}{k+d-1}$ must be strictly positive, so that $W'$ defeats $W''$ without the manipulation. To satisfy this, we choose $x'$ and $x''$ such that $x'' - x' = \left\lceil \frac{k+d-1}{d}(s_{c'} - s_{c''} - a'') \right\rceil$, and we will show that this holds.
	
	\begin{align*}
		(x'' - x') \cdot \frac{d}{k+d-1} + a'' + s_{c''} - s_{c'} &= \left\lceil \frac{k+d-1}{d}(s_{c'} - s_{c''} - a'') \right\rceil \cdot \frac{d}{k+d-1} + a'' + s_{c''} - s_{c'} \\
		&\ge \frac{k+d-1}{d}(s_{c'} - s_{c''} - a'') \cdot \frac{d}{k+d-1} + a'' + s_{c''} - s_{c'} \\
		&= (s_{c'} - s_{c''} - a'') + a'' + s_{c''} - s_{c'} \\
		&= 0
	\end{align*}
	
	And similarly for the second gap:
	\begin{align*}
		\frac{d}{t+d} - p'' + s_{c'} - s_{c''} - (x'' - x') \cdot \frac{d}{k+d-1} &= \frac{d}{t+d} - p'' + s_{c'} - s_{c''} \\
		&\quad- \left\lceil \frac{k+d-1}{d}(s_{c'} - s_{c''} - a'') \right\rceil \cdot \frac{d}{k+d-1} \\
		&> \frac{d}{t+d} - p'' + s_{c'} - s_{c''} \\
		&\quad- \left( \frac{k+d-1}{d}(s_{c'} - s_{c''} - a'') + 1 \right) \cdot \frac{d}{k+d-1} \\
		&= \frac{d}{t+d} - p'' + s_{c'} - s_{c''} - (s_{c'} - s_{c''} - a'') - \frac{d}{k+d-1} \\
		&= \frac{d}{t+d} - p'' + a'' - \frac{d}{k+d-1}
	\end{align*}
	
	We have shown here as well that the expression required to be non-negative is indeed non-negative. For the strictly positive side, the algebraic path yielded a strict inequality directly. Thus, it remains to show that $\frac{d}{t+d} - p'' + a'' - \frac{d}{k+d-1}$ is a non-negative expression, exactly as before. We will demonstrate this now.
	
	If $c'' \in A_i$, then $p'' = \frac{d}{t+d}$ and $a'' = \frac{d}{l+d}$. Therefore:
	\[
	\frac{d}{t+d} - p'' + a'' - \frac{d}{k+d-1} = \frac{d}{t+d} - \frac{d}{t+d} + \frac{d}{l+d} - \frac{d}{k+d-1} = \frac{d}{l+d} - \frac{d}{k+d-1}
	\]
	Since $l \le k-1$, it implies $l+d \le k+d-1$, and thus $\frac{d}{l+d} \ge \frac{d}{k+d-1}$. Therefore, $\frac{d}{l+d} - \frac{d}{k+d-1} \ge 0$. We have shown that this is indeed a non-negative expression.
	
	If $c'' \notin P_i$, then $p'' = 0$ and $a'' = 0$. Therefore:
	\[
	\frac{d}{t+d} - p'' + a'' - \frac{d}{k+d-1} = \frac{d}{t+d} - 0 + 0 - \frac{d}{k+d-1} = \frac{d}{t+d} - \frac{d}{k+d-1}
	\]
	Since $t \le k-1$, it implies $t+d \le k+d-1$, and thus $\frac{d}{t+d} \ge \frac{d}{k+d-1}$. Therefore, $\frac{d}{t+d} - \frac{d}{k+d-1} \ge 0$. We have shown that this is indeed a non-negative expression.

	When we defined $x'$ and $x''$ earlier, we stated that they must satisfy $0 \le x' \le f'$ and $0 \le x'' \le f'$. For it to be possible to assign values that satisfy this condition, we must demonstrate that $-f' \le x'' - x' \le f'$.
	
	Since the definition of the difference $(x'' - x')$ utilizes the values $s_{c'}$ and $s_{c''}$, we first examine the bounds on their sizes:
	\begin{align*}
		s_{c'} &= \sum_{j \in F} H_d(|P_j \cap W'|) - s_F \\
		&= \sum_{j \in F} H_d(|P_j \cap W'|) - \sum_{j \in F} H_d(|P_j \cap C_{K-1}|) \\
		&= \sum_{j \in F} \big( H_d(|P_j \cap W'|) - H_d(|P_j \cap C_{K-1}|) \big) \\
		&\le \sum_{j \in F} 1 = f \cdot 1 = f
	\end{align*}
	Furthermore, it is clear that $s_{c'} \ge 0$. We similarly show this for $s_{c''}$:
	\begin{align*}
		s_{c''} &= \sum_{j \in F} H_d(|P_j \cap W''|) - s_F \\
		&= \sum_{j \in F} H_d(|P_j \cap W''|) - \sum_{j \in F} H_d(|P_j \cap C_{K-1}|) \\
		&= \sum_{j \in F} \big( H_d(|P_j \cap W''|) - H_d(|P_j \cap C_{K-1}|) \big) \\
		&\le \sum_{j \in F} 1 = f \cdot 1 = f
	\end{align*}
	Here as well, it is clear that $s_{c''} \ge 0$. 
	
	We now show that $-f' \le x'' - x' \le f'$, starting with the case where $x'' - x' = \left\lfloor \frac{k+d-1}{d}(s_{c'} - s_{c''} - a'') \right\rfloor + 1$. Recall that by definition $f' = n - f - 1$, and in our case $f = \left\lfloor \frac{dn}{k+2d-1}\right\rfloor - 1 $.
	
	\begin{align*}
		x'' - x' &= \left\lfloor \frac{k+d-1}{d}(s_{c'} - s_{c''} - a'') \right\rfloor + 1 \\
		&\ge \left\lfloor \frac{k+d-1}{d}(-s_{c''} - a'') \right\rfloor + 1 \\
		&\ge \left\lfloor \frac{k+d-1}{d}(-s_{c''} - 1) \right\rfloor + 1 \\
		&\ge \left\lfloor \frac{k+d-1}{d}(-f - 1) \right\rfloor + 1 \\
		&= \left\lfloor -f \cdot \frac{k+d-1}{d} - \frac{k+d-1}{d} \cdot 1 \right\rfloor + 1 \\		
		&= \left\lfloor -f \cdot \frac{k+d-1}{d} -f \cdot \frac{d}{d} +f \cdot \frac{d}{d} -  \frac{k+d-1}{d} \right\rfloor + 1 \\		
		&= \left\lfloor -f \cdot \frac{k+2d-1}{d} +f \cdot 1 -  \frac{k+d-1}{d} \right\rfloor + 1 \\		
		&= \left\lfloor -\left( \left\lfloor \frac{dn}{k+2d-1} \right\rfloor - 1 \right) \cdot \frac{k+2d-1}{d} + f -  \frac{k+d-1}{d} \right\rfloor + 1 \\
		&= \left\lfloor -\left\lfloor \frac{dn}{k+2d-1} \right\rfloor \cdot \frac{k+2d-1}{d} + \frac{k+2d-1}{d} + f -  \frac{k+d-1}{d} \right\rfloor + 1 \\
		&\ge \left\lfloor -\left( \frac{dn}{k+2d-1} \right) \cdot \frac{k+2d-1}{d} + \frac{k+2d-1}{d} + f -  \frac{k+d-1}{d} \right\rfloor + 1 \\
		&= \left\lfloor - \frac{dn}{d} + \frac{k}{d} + \frac{2d}{d} - \frac{1}{d} + f -  \frac{k}{d} - \frac{d}{d} + \frac{1}{d} \right\rfloor + 1 \\
		&= \left\lfloor - n + f + \frac{d}{d} \right\rfloor + 1 \\
		&= \left\lfloor - n + f + 1 \right\rfloor + 1 \\
		&= ( - n + f + 1 ) + 1 \\
		&= -( n - f - 1) + 1 \\
		&= -f' + 1 \\
		&> -f'
	\end{align*}
	
	And for the other side:
	\begin{align*}
		x'' - x' &= \left\lfloor \frac{k+d-1}{d}(s_{c'} - s_{c''} - a'') \right\rfloor + 1 \\
		&\le \left\lfloor \frac{k+d-1}{d}(s_{c'}) \right\rfloor + 1 \\		
		&\le \left\lfloor \frac{k+d-1}{d}(f) \right\rfloor + 1 \\
		&= \left\lfloor \frac{k+d-1}{d} \cdot f + \frac{d}{d} \cdot f - \frac{d}{d} \cdot f \right\rfloor + 1 \\
		&= \left\lfloor \frac{k+2d-1}{d} \cdot f - 1 \cdot f \right\rfloor + 1 \\
		&= \left\lfloor \frac{k+2d-1}{d} \cdot \left( \left\lfloor \frac{dn}{k+2d-1} \right\rfloor - 1 \right) - f \right\rfloor + 1 \\
		&= \left\lfloor \frac{k+2d-1}{d} \cdot \left\lfloor \frac{dn}{k+2d-1} \right\rfloor - \frac{k+2d-1}{d} - f \right\rfloor + 1 \\
		&\le \left\lfloor \frac{k+2d-1}{d} \cdot \left( \frac{dn}{k+2d-1} \right) - \frac{k+2d-1}{d} - f \right\rfloor + 1 \\
		&= \left\lfloor \frac{dn}{d} - \frac{k}{d} - \frac{2d}{d} + \frac{1}{d} - f \right\rfloor + 1 \\
		&= \left\lfloor n - \frac{k}{d} - 2 + \frac{1}{d} - f \right\rfloor + 1 \\
		&= \left\lfloor (n - f - 1) - 1 - \frac{k-1}{d} \right\rfloor + 1 \\
		&= (n - f - 1) - 1 + \left\lfloor - \frac{k-1}{d} \right\rfloor + 1 \\		
		&= f' + \left\lfloor \frac{1-k}{d} \right\rfloor \\
		&\le f' + 0 \\
		&= f'
	\end{align*}
	
	We also show this for the case where $x'' - x' = \left\lceil \frac{k+d-1}{d}(s_{c'} - s_{c''} - a'') \right\rceil$:
	\begin{align*}
		x'' - x' &= \left\lceil \frac{k+d-1}{d}(s_{c'} - s_{c''} - a'') \right\rceil \\
		&\ge \left\lceil \frac{k+d-1}{d}(-s_{c''} - a'') \right\rceil \\
		&\ge \left\lceil \frac{k+d-1}{d}(-s_{c''} - 1) \right\rceil \\		
		&\ge \left\lceil \frac{k+d-1}{d}(-f - 1) \right\rceil \\
		&= \left\lceil -\frac{k+d-1}{d} \cdot f - \frac{k+d-1}{d} \right\rceil \\
		&= \left\lceil -\frac{k+d-1}{d} \cdot f -\frac{d}{d} \cdot f +\frac{d}{d} \cdot f - \frac{k+d-1}{d} \right\rceil \\
		&= \left\lceil -\frac{k+2d-1}{d} \cdot f + 1 \cdot f - \frac{k+d-1}{d} \right\rceil \\
		&= \left\lceil -\frac{k+2d-1}{d} \cdot \left( \left\lfloor \frac{dn}{k+2d-1} \right\rfloor - 1 \right) + f - \frac{k+d-1}{d} \right\rceil \\
		&= \left\lceil -\frac{k+2d-1}{d} \cdot \left\lfloor \frac{dn}{k+2d-1} \right\rfloor + \frac{k+2d-1}{d} + f - \frac{k+d-1}{d} \right\rceil \\
		&\ge \left\lceil -\frac{k+2d-1}{d} \cdot \left( \frac{dn}{k+2d-1} \right) + \frac{k+2d-1}{d} + f - \frac{k+d-1}{d} \right\rceil \\
		&= \left\lceil - \frac{dn}{d} + \frac{k}{d} + \frac{2d}{d} - \frac{1}{d} + f - \frac{k}{d} - \frac{d}{d} + \frac{1}{d} \right\rceil  \\
		&= \left\lceil - n + f + \frac{d}{d} \right\rceil  \\		
		&= \left\lceil - n + f + 1 \right\rceil  \\		
		&= - n + f + 1 \\		
		&= - (n - f - 1) \\
		&= - f' \\
	\end{align*}
	
	And for the other side:
		\begin{align*}
			x'' - x' &= \left\lceil \frac{k+d-1}{d}(s_{c'} - s_{c''} - a'') \right\rceil \\
			&\le \left\lceil \frac{k+d-1}{d} \cdot s_{c'} \right\rceil \\
			&\le \left\lceil \frac{k+d-1}{d} \cdot f \right\rceil \\
			&= \left\lceil \frac{k+d-1}{d} \cdot f + \frac{d}{d} \cdot f - \frac{d}{d} \cdot f \right\rceil \\
			&= \left\lceil \frac{k+2d-1}{d} \cdot f - 1 \cdot f \right\rceil \\
			&= \left\lceil \frac{k+2d-1}{d} \cdot \left( \left\lfloor \frac{dn}{k+2d-1} \right\rfloor - 1 \right) - f \right\rceil \\
			&= \left\lceil \frac{k+2d-1}{d} \cdot \left\lfloor \frac{dn}{k+2d-1} \right\rfloor - \frac{k+2d-1}{d} - f \right\rceil \\
			&\le \left\lceil \frac{k+2d-1}{d} \cdot \left( \frac{dn}{k+2d-1} \right) - \frac{k+2d-1}{d} - f \right\rceil \\
			&= \left\lceil \frac{dn}{d} - \frac{k}{d} - \frac{2d}{d} + \frac{1}{d} - f \right\rceil \\
			&= \left\lceil n - \frac{k}{d} - 2 + \frac{1}{d} - f \right\rceil \\
			&= \left\lceil (n - f - 1) - 1 - \frac{k-1}{d} \right\rceil \\
			&= (n - f - 1) + \left\lceil  - 1 - \frac{k-1}{d} \right\rceil \\
			&= (n - f - 1) - \left\lfloor 1 + \frac{k-1}{d} \right\rfloor \\
			&\le (n - f - 1) - 1 \\
			&< (n - f - 1) \\
			&= f'
		\end{align*}
	
	Thus, we have shown that there exist $x'$ and $x''$ that satisfy the requirements. If, according to our definition of $x'' - x'$, it turns out that $x'' - x' = f'$, we will be forced to choose $x'' = f'$ and $x' = 0$. If it turns out that $x'' - x' = -f'$, we will be forced to choose $x'' = 0$ and $x' = f'$. In any other case, we will have flexibility. However, we specifically require that if $x'' - x' = 0$, we explicitly choose $x'' = 0$ and $x' = 0$.
	
	\medskip
	
	It remains to be shown that under both the manipulated profile $(A_i, P_F, P_{\overline{F}})$ and the truthful profile $(P_i, P_F, P_{\overline{F}})$, no other committee can defeat $W''$ and $W'$, respectively.
	
	Suppose there exists such a committee, denoted by $W^*$, and we will examine various cases for it. First, we consider the case where it is composed strictly of the candidates $\{c_1, \ldots, c_{k-1}, c'', c'\}$. There are exactly $k+1$ candidates here, and therefore exactly one candidate must be excluded to form a committee of size $k$. The possibilities of excluding $c''$ or $c'$ have already been analyzed in $W'$ and $W''$; thus, we examine the possibility of excluding one candidate from $\{c_1, \ldots, c_{k-1}\}$. Let us denote this excluded candidate by $c^{-}$. We will examine what the score of $W^*$ will be in this situation for the different voting profiles.
	
	\begin{align*}
		\left. score_{d-RPAV}(W^*) \right| \, (A_i, P_F, P_{\overline{F}}) &= H_d(|A_i \cap W^*|) + \sum_{j \in F} H_d(|W^* \cap P_j|) + \sum_{j \in \overline{F}} H_d(|W^* \cap P_j|) \\
		\left. score_{d-RPAV}(W^*) \right| \, (P_i, P_F, P_{\overline{F}}) &= H_d(|P_i \cap W^*|) + \sum_{j \in F} H_d(|W^* \cap P_j|) + \sum_{j \in \overline{F}} H_d(|W^* \cap P_j|)
	\end{align*}
	
	We will decompose the various components separately. The contribution from the voters in $\overline{F}$ is:
	\begin{align*}
		\sum_{j \in \overline{F}} H_d(|W^* \cap P_j|) &= \min(x', x'') \cdot H_d(k) + \big(\max(x', x'') - \min(x', x'')\big) \cdot H_d(k-1) \\
		&\quad+ \big(f' - \max(x', x'')\big) \cdot H_d(k-2) \\
		&= \min(x', x'') \cdot \big(H_d(k) - H_d(k-1)\big) + \max(x', x'') \cdot \big(H_d(k-1) - H_d(k-2)\big) \\
		&\quad + f' \cdot H_d(k-2) + f' \cdot \frac{d}{k+d-2} - f' \cdot \frac{d}{k+d-2} \\
		&= \min(x', x'') \cdot \frac{d}{k+d-1} + \max(x', x'') \cdot \frac{d}{k+d-2} \\
		&\quad+ f' \cdot \left(H_d(k-2) + \frac{d}{k+d-2}\right) - f' \cdot \frac{d}{k+d-2} \\
		&= \min(x', x'') \cdot \frac{d}{k+d-1} + f' \cdot H_d(k-1) - \frac{d}{k+d-2}\big(f' - \max(x', x'')\big)
	\end{align*}
	
	Recall that the contribution of $\overline{F}$ to the score of $W'$ was $f' \cdot H_d(k-1) + x' \cdot \frac{d}{k+d-1}$, and the contribution of $\overline{F}$ to the score of $W''$ was $f' \cdot H_d(k-1) + x'' \cdot \frac{d}{k+d-1}$.
	
	If $x'' > x'$, then the contribution of $\overline{F}$ to the score of $W^*$ is $x' \cdot \frac{d}{k+d-1} + f' \cdot H_d(k-1) - \frac{d}{k+d-2}(f' - x'')$. Since $x'' \le f'$, we can state that this contribution is at most $x' \cdot \frac{d}{k+d-1} + f' \cdot H_d(k-1)$. This means it is less than or equal to the contribution of $\overline{F}$ to the score of $W'$, and it is strictly less than the contribution of $\overline{F}$ to the score of $W''$ (since $x'' > x'$).
	
	If $x' > x''$, then the contribution of $\overline{F}$ to the score of $W^*$ is $x'' \cdot \frac{d}{k+d-1} + f' \cdot H_d(k-1) - \frac{d}{k+d-2}(f' - x')$. Since $x' \le f'$, we can state that this contribution is at most $x'' \cdot \frac{d}{k+d-1} + f' \cdot H_d(k-1)$. This means it is less than or equal to the contribution of $\overline{F}$ to the score of $W''$, and it is strictly less than the contribution of $\overline{F}$ to the score of $W'$ (since $x' > x''$).
	
	If $x'' = x'$ (meaning $x'' - x' = 0$), in this case we previously defined that we explicitly choose $x'' = 0$ and $x' = 0$. Therefore, we can state that the contribution of $\overline{F}$ to the score of $W^*$ is $f' \cdot H_d(k-1) - \frac{d}{k+d-2} \cdot f'$. This means this contribution is strictly less than both the contribution of $\overline{F}$ to the score of $W''$ and the contribution of $\overline{F}$ to the score of $W'$.
	
	Note that if $W^*$ wins under the profile $(P_i, P_F, P_{\overline{F}})$, this does not interfere with our proof, because $c' \in W^*$; meaning that $c'$ is in the winning committee under truthful reporting, which is exactly what we want to prove.

	We now examine the side of $(A_i, P_F)$. We divide the analysis according to the cases with which we originally selected the candidates, and we will prove that in all of them, $W''$ defeats $W^*$ under the profile $(A_i, P_F, P_{\overline{F}})$.
	
	In Case 1 and Case 3.1, assume by contradiction that under the profile $(A_i, P_F, P_{\overline{F}})$, the committee $W^*$ defeats the committee $W''$. Since we have already proven that under this same profile $W''$ defeats $W'$, then under this profile $W^*$ must also defeat $W'$. Since we have already seen that from $P_{\overline{F}}$ the score of $W'$ is greater than or equal to that of $W^*$, this implies that even under the profile $(A_i, P_F)$, $W^*$ defeats $W'$. But since in these cases we can state that $W_{F, A_i} = W'$, and since $W_{F, A_i}$ is the winning committee in $(A_i, P_F)$, it is clear by definition that $W'$ will defeat any other committee under the profile $(A_i, P_F)$, including our $W^*$. We have reached a contradiction, and therefore we have proven that $W''$ will also defeat $W^*$ under the profile $(A_i, P_F, P_{\overline{F}})$.
	
	In Case 2 and Case 3.2, it is even simpler. In these cases, we can state that $W_{F, A_i} = W''$, and since $W_{F, A_i}$ is the winning committee in $(A_i, P_F)$, it is clear that $W''$ will receive from $(A_i, P_F)$ a score greater than or equal to any other committee, including our $W^*$. Therefore, since we have seen that in all cases $W''$ will receive from $P_{\overline{F}}$ a score that is at least greater than or equal to the score $W^*$ receives from it, and since in the event of a tie $W_{F, A_i}$ is lexicographically stronger, we can state that $W''$ will also defeat $W^*$ under the voting profile $(A_i, P_F, P_{\overline{F}})$.
	
	\bigskip

	We now examine all other committees. Any additional committee we have not examined contains at least one candidate who is not in the set $\{c_1, \ldots, c_{k-1}, c'', c'\}$. We want to prove that no such committee that does not contain $c'$ will win under the profile $(P_i, P_F, P_{\overline{F}})$, and that no such committee that contains $c'$ will win under the profile $(A_i, P_F, P_{\overline{F}})$.
	
	To prove this, we will start from a specific committee that is desirable in that profile, and we will swap candidates one by one, where each time we remove a candidate from the committee from the set $\{c_1, \ldots, c_{k-1}, c'', c'\}$, and we add a candidate to the committee, whom we denote by $c^{+} \in C \setminus \{c_1, \ldots, c_{k-1}, c'', c'\}$. We will prove that in each such step the score does not increase, and therefore ultimately this theoretical committee will not be able to win.
	
	Before that, we will show that any removal of a candidate from the set $\{c_1, \ldots, c_{k-1}\}$ in exchange for adding a candidate from the set $C \setminus \{c_1, \ldots, c_{k-1}, c'', c'\}$ can only decrease the score, because we will use this fact several times.
	
	A candidate from the set $\{c_1, \ldots, c_{k-1}\}$ receives votes from all of $\overline{F}$, while the candidate we add, $c^{+}$, from the set $C \setminus \{c_1, \ldots, c_{k-1}, c'', c'\}$, receives no votes at all from $\overline{F}$. Therefore, this loss will decrease our score by at least $\frac{1}{k}$ from each of the voters in $\overline{F}$. This means a total minimal reduction of $\frac{d}{k+d-1} \cdot f'$. On the other hand, the candidate $c^{+}$ may receive a score of at most $1$ from everyone in the set $F \cup \{i\}$ (regardless of whether their ballot is $(P_i, P_F)$ or $(A_i, P_F)$), and the removal of the candidate from the set $\{c_1, \ldots, c_{k-1}\}$ might not decrease our score at all; meaning, the maximal score addition from the set $F \cup \{i\}$ would be $1 \cdot (f+1)$. This means the maximal addition we could get from this kind of swap would be $(f+1) - \frac{d}{k+d-1} \cdot f'$. We will show that it will not increase the score at all:
	
	\begin{align*}
		(f+1) - \frac{d}{k+d-1} \cdot f' &= \frac{1}{k+d-1} \cdot \left((k+d-1) \cdot (f+1) - d \cdot f'\right) \\
		&= \frac{1}{k+d-1} \cdot \left((k+d-1) \cdot (f+1) - d \cdot (n - f - 1)\right) \\
		&= \frac{1}{k+d-1} \cdot \left(k \cdot f + d \cdot f - f + k + d - 1 - d \cdot n + d \cdot f + d\right) \\
		&= \frac{1}{k+d-1} \cdot \left(k \cdot f + 2d \cdot f - f + k + 2d -1 - d \cdot n\right) \\
				&= \frac{1}{k+d-1} \cdot \left((k + 2d - 1) \cdot f + k + 2d - 1 - d \cdot n \right) \\
		&= \frac{1}{k+d-1} \cdot \left((k + 2d - 1) \cdot \left( \left\lfloor \frac{dn}{k+2d-1} \right\rfloor - 1 \right) + k + 2d - 1 - d \cdot n \right) \\
		&= \frac{1}{k+d-1} \cdot \left((k + 2d - 1) \cdot \left\lfloor \frac{dn}{k+2d-1} \right\rfloor - (k + 2d - 1) + k + 2d - 1 - d \cdot n \right) \\
		&= \frac{1}{k+d-1} \cdot \left((k + 2d - 1) \cdot \left\lfloor \frac{dn}{k+2d-1} \right\rfloor - d \cdot n \right) \\
		&\le \frac{1}{k+d-1} \cdot \left((k + 2d - 1) \cdot \left( \frac{dn}{k+2d-1} \right) - d \cdot n \right) \\
		&= \frac{1}{k+d-1} \cdot \left(dn - d \cdot n \right) \\
		&= \frac{1}{k+d-1} \cdot (0) \\
		&= 0
	\end{align*}

	This means we have shown that swapping a candidate from the set $\{c_1, \ldots, c_{k-1}\}$ in exchange for adding a candidate from the set $C \setminus \{c_1, \ldots, c_{k-1}, c'', c'\}$ yields, in the worst-case scenario, a score addition of $0$. 
	In practice, this swap will even yield a strictly negative score addition in all cases, because for the score of $c^{+}$ from the set $F \cup \{i\}$ to truly reach $1 \cdot (f+1)$, $c^{+}$ would need to be the only candidate that the entire set voted for, at least under the profile $(A_i, P_F)$; in such a situation, it is clear they would be selected for $W_{F, A_i}$, and therefore cannot be from the set $C \setminus \{c_1, \ldots, c_{k-1}, c'', c'\}$. 
	Thus, such a swap will strictly decrease the score. 
	
	\medskip
	
	We now begin to prove that no committee that does not contain $c'$ will win under the profile $(P_i, P_F, P_{\overline{F}})$. 
	
	If there is such a committee, we divide it into two cases: either it contains $c''$ or it does not contain $c''$. 
	
	If it contains $c''$, we start from the committee $W''$, which we have already seen is defeated by $W'$ under the profile $(P_i, P_F, P_{\overline{F}})$, and we swap candidates one by one. 
	Since the committee we are examining does not contain $c'$ and does contain $c''$, any removal of a candidate from $W''$ will be of a candidate from the set $\{c_1, \ldots, c_{k-1}\}$, and any candidate addition will be an addition of a candidate from the set $C \setminus \{c_1, \ldots, c_{k-1}, c'', c'\}$. 
	We have seen that such a swap only decreases the score. 
	
	If it does not contain $c''$, the first step will be the removal of $c'$ itself from $W'$ and replacing it with another candidate $c^{+} \in C \setminus \{c_1, \ldots, c_{k-1}, c'', c'\}$. 
	The subsequent steps will be swapping a candidate from $\{c_1, \ldots, c_{k-1}\}$ with a candidate from the set $C \setminus \{c_1, \ldots, c_{k-1}, c'', c'\}$, from which we can only lose as stated above; therefore, we will examine only the first step. 
	
	It is clear that from $P_{\overline{F}}$ we can only "gain", because there are no votes for $c^{+}$ there, and there might be votes for $c'$. 
	We will show that even from $(P_i, P_F)$ we will not lose in the various cases. 
	
	Let us denote the committee after the first swap by $W^{**}$, and note that it is essentially $\{c_1, \ldots, c_{k-1}, c^{+}\}$. 
	
	In Case 1 and Case 3.1, $W_{F, A_i} = W'$, meaning it wins for $(A_i, P_F)$. 
	$W^{**}$ has a certain score in $(A_i, P_F)$, and it is not greater than the score of $W'$. 
	If we move to $(P_i, P_F)$, the portion of the score each of the committees receives from $P_F$ will not change. 
	As for the portion from $P_i$, the score of $W^{**}$ can increase from a minimum of $H_d(l)$ to a maximum of $H_d(t+1)$. 
	Since the score of $W'$ already increases from $H_d(l)$ to $H_d(t+1)$ due to the exact same change, the score of $W^{**}$ will not increase by more than that of $W'$. 
	Therefore, $W'$ will defeat $W^{**}$ also under the profile $(P_i, P_F)$ in these cases. 
	
	In Case 2 and Case 3.2, if $c^{+} \in G_i$, then if $W^{**}$ were to defeat $W'$ under the profile $(P_i, P_F)$, then $c^{+}$ should have been selected in these cases instead of $c'$ exactly according to the mechanism by which we chose $c'$. 
	If $c^{+} \notin G_i$, then it is indeed possible that $W^{**}$ defeats $W'$ under the profile $(P_i, P_F)$. 
	Therefore, we will step back slightly in the proof, and we will prove that in this case $W'$ defeats $W^{**}$ under the profile $(P_i, P_F, P_{\overline{F}})$. 
	We will need significantly more detail for this.

\medskip

Recall the notations $s_{c'}$ and $s_{c''}$. Similarly, we define $s_{c^{+}}$ as the marginal score that $c^{+}$ contributes to the score of $W^{**}$ from the subset $F$, meaning $s_{c^{+}} \coloneqq \sum_{j \in F} H_d(|P_j \cap W^{**}|) - s_F$.

Recall the notations $p''$ and $a''$. Similarly, we define $p^{+}$ and $a^{+}$ to be the marginal contribution of candidate $c^{+}$ from voter $i$ to the score of $W^{**}$. Specifically, if $c^{+} \in A_i$, then $p^{+} = \frac{d}{t+d}$ and $a^{+} = \frac{d}{l+d}$; otherwise (i.e., if $c^{+} \notin P_i$), $p^{+} = 0$ and $a^{+} = 0$.

Note that:
\[ \left. score_{d-RPAV}(W'') \right| \, (A_i, P_F) = H_d(|A_i \cap W''|) + \sum_{j \in F} H_d(|W'' \cap P_j|) = H_d(l) + a'' + s_F + s_{c''} \]

And that:
\[ \left. score_{d-RPAV}(W^{**}) \right| \, (A_i, P_F) = H_d(|A_i \cap W^{**}|) + \sum_{j \in F} H_d(|W^{**} \cap P_j|) = H_d(l) + a^{+} + s_F + s_{c^{+}} \]

Since in these cases $W_{F, A_i} = W''$, then $W''$ defeats any committee under the profile $(A_i, P_F)$, and in particular the committee $W^{**}$. Therefore, $H_d(l) + a'' + s_F + s_{c''} \ge H_d(l) + a^{+} + s_F + s_{c^{+}}$, which implies that $a'' + s_{c''} \ge a^{+} + s_{c^{+}}$, meaning that $s_{c^{+}} \le s_{c''} + a'' - a^{+}$. Furthermore, if the inequality is tight, then $W''$ is lexicographically preferred over $W^{**}$.

Recall that the score of $W'$ under the profile $(P_i, P_F, P_{\overline{F}})$ resulted in:
\[ \left. score_{d-RPAV}(W') \right| \, (P_i, P_F, P_{\overline{F}}) = H_d(t) + \frac{d}{t+d} + s_F + s_{c'} + f' \cdot H_d(k-1) + x' \cdot \frac{d}{k+d-1} \]

And we calculate the score of $W^{**}$ under the profile $(P_i, P_F, P_{\overline{F}})$:
\begin{align*}
	\left. score_{d-RPAV}(W^{**}) \right| \, (P_i, P_F, P_{\overline{F}}) &= H_d(|P_i \cap W^{**}|) + \sum_{j \in F} H_d(|W^{**} \cap P_j|) + \sum_{j \in \overline{F}} H_d(|W^{**} \cap P_j|) \\
	&= H_d(t) + p^{+} + s_F + s_{c^{+}} + f' \cdot H_d(k-1)
\end{align*}

This means the gap between $W'$ and $W^{**}$ under the profile $(P_i, P_F, P_{\overline{F}})$ is $\frac{d}{t+d} - p^{+} + s_{c'} - s_{c^{+}} + x' \cdot \frac{d}{k+d-1}$ in favor of $W'$.

Recall that we defined that in the situation where $W'$ is lexicographically stronger than $W''$, then $(x'' - x') = \left\lfloor \frac{k+d-1}{d}(s_{c'} - s_{c''} - a'') \right\rfloor + 1$, and in the situation where $W''$ is lexicographically stronger than $W'$, then $(x'' - x') = \left\lceil \frac{k+d-1}{d}(s_{c'} - s_{c''} - a'') \right\rceil$.
This means that in the situation where $W'$ is lexicographically stronger than $W''$, we can state $x' = x'' - \left\lfloor \frac{k+d-1}{d}(s_{c'} - s_{c''} - a'') \right\rfloor - 1$, and in the situation where $W''$ is lexicographically stronger than $W'$, we can state $x' = x'' - \left\lceil \frac{k+d-1}{d}(s_{c'} - s_{c''} - a'') \right\rceil$.

We examine the gap in the situation where $W'$ is lexicographically stronger than $W''$:
\begin{align*}
	&\frac{d}{t+d} - p^{+} + s_{c'} - s_{c^{+}} + x' \cdot \frac{d}{k+d-1} \\
	&= \frac{d}{t+d} - p^{+} + s_{c'} - s_{c^{+}} + \big(x'' - \left\lfloor \frac{k+d-1}{d}(s_{c'} - s_{c''} - a'') \right\rfloor - 1\big) \cdot \frac{d}{k+d-1} \\
	&\ge \frac{d}{t+d} - p^{+} + s_{c'} - s_{c^{+}} + \big(x'' - \frac{k+d-1}{d}(s_{c'} - s_{c''} - a'') - 1\big) \cdot \frac{d}{k+d-1} \\
	&= \frac{d}{t+d} - p^{+} + s_{c'} - s_{c^{+}} + x'' \cdot \frac{d}{k+d-1} - (s_{c'} - s_{c''} - a'') - \frac{d}{k+d-1} \\
	&= \frac{d}{t+d} - p^{+} + s_{c'} - s_{c^{+}} + x'' \cdot \frac{d}{k+d-1} - s_{c'} + s_{c''} + a'' - \frac{d}{k+d-1} \\
	&= \frac{d}{t+d} - p^{+} - s_{c^{+}} + x'' \cdot \frac{d}{k+d-1} + s_{c''} + a'' - \frac{d}{k+d-1} \\
	&\ge \frac{d}{t+d} - p^{+} - (s_{c''} + a'' - a^{+}) + x'' \cdot \frac{d}{k+d-1} + s_{c''} + a'' - \frac{d}{k+d-1} \\
	&= \frac{d}{t+d} - p^{+} - s_{c''} - a'' + a^{+} + x'' \cdot \frac{d}{k+d-1} + s_{c''} + a'' - \frac{d}{k+d-1} \\
	&= \frac{d}{t+d} - p^{+} + a^{+} + x'' \cdot \frac{d}{k+d-1} - \frac{d}{k+d-1} \\
	&\ge \frac{d}{t+d} - p^{+} + a^{+} - \frac{d}{k+d-1}
\end{align*}

If $c^{+} \notin P_i$, then the expression becomes $\frac{d}{t+d} - p^{+} + a^{+} - \frac{d}{k+d-1} = \frac{d}{t+d} - 0 + 0 - \frac{d}{k+d-1} = \frac{d}{t+d} - \frac{d}{k+d-1}$. Since $t \le k-1$, it follows that $t+d \le k+d-1$, and thus $\frac{d}{t+d} \ge \frac{d}{k+d-1}$. Therefore, $\frac{d}{t+d} - \frac{d}{k+d-1} \ge 0$, meaning we obtained a non-negative expression.

If $c^{+} \in A_i$, then the expression becomes $\frac{d}{t+d} - p^{+} + a^{+} - \frac{d}{k+d-1} = \frac{d}{t+d} - \frac{d}{t+d} + \frac{d}{l+d} - \frac{d}{k+d-1} = \frac{d}{l+d} - \frac{d}{k+d-1}$. Since it is known that $l \le k-1$, it follows that $l+d \le k+d-1$, and thus $\frac{d}{l+d} \ge \frac{d}{k+d-1}$. Therefore, $\frac{d}{l+d} - \frac{d}{k+d-1} \ge 0$, meaning we obtained a non-negative expression here as well.

Note that during the calculation we used the inequality $s_{c^{+}} \le s_{c''} + a'' - a^{+}$, and we know that if it holds tightly, then $W''$ is lexicographically preferred over $W^{**}$. Now, if it does not hold tightly, we can replace the greater-than-or-equal sign in our notation with a strictly greater-than sign, and therefore we obtained a strictly positive gap in favor of $W'$ over $W^{**}$, so $W'$ will defeat $W^{**}$ also under the truthful profile $(P_i, P_F, P_{\overline{F}})$. If it does hold tightly, then $W''$ is lexicographically preferred over $W^{**}$, and we are in a situation where $W'$ is lexicographically stronger than $W''$; thus, we can state that $W'$ is lexicographically preferred over $W^{**}$. Therefore, since the gap is non-negative, in this case too we can state that $W'$ will defeat $W^{**}$ under the truthful profile $(P_i, P_F, P_{\overline{F}})$.

We now examine the gap in the situation where $W''$ is lexicographically stronger than $W'$:
\begin{align*}
	&\frac{d}{t+d} - p^{+} + s_{c'} - s_{c^{+}} + x' \cdot \frac{d}{k+d-1} \\
	&= \frac{d}{t+d} - p^{+} + s_{c'} - s_{c^{+}} + \big(x'' - \left\lceil \frac{k+d-1}{d}(s_{c'} - s_{c''} - a'') \right\rceil\big) \cdot \frac{d}{k+d-1} \\
	&> \frac{d}{t+d} - p^{+} + s_{c'} - s_{c^{+}} + \big(x'' - \frac{k+d-1}{d}(s_{c'} - s_{c''} - a'') - 1\big) \cdot \frac{d}{k+d-1} \\
	&= \frac{d}{t+d} - p^{+} + s_{c'} - s_{c^{+}} + x'' \cdot \frac{d}{k+d-1} - (s_{c'} - s_{c''} - a'') - \frac{d}{k+d-1} \\
	&= \frac{d}{t+d} - p^{+} + s_{c'} - s_{c^{+}} + x'' \cdot \frac{d}{k+d-1} - s_{c'} + s_{c''} + a'' - \frac{d}{k+d-1} \\
	&= \frac{d}{t+d} - p^{+} - s_{c^{+}} + x'' \cdot \frac{d}{k+d-1} + s_{c''} + a'' - \frac{d}{k+d-1} \\
	&\ge \frac{d}{t+d} - p^{+} - (s_{c''} + a'' - a^{+}) + x'' \cdot \frac{d}{k+d-1} + s_{c''} + a'' - \frac{d}{k+d-1} \\
	&= \frac{d}{t+d} - p^{+} - s_{c''} - a'' + a^{+} + x'' \cdot \frac{d}{k+d-1} + s_{c''} + a'' - \frac{d}{k+d-1} \\
	&= \frac{d}{t+d} - p^{+} + a^{+} + x'' \cdot \frac{d}{k+d-1} - \frac{d}{k+d-1} \\
	&\ge \frac{d}{t+d} - p^{+} + a^{+} - \frac{d}{k+d-1}
\end{align*}

We obtained the same expression, which we showed in the previous case to be non-negative, and we saw that the gap is strictly greater than it. Therefore, since the gap is strictly positive, we can state here as well that $W'$ will defeat $W^{**}$ under the truthful profile $(P_i, P_F, P_{\overline{F}})$.

\medskip	
	
	Now, after having proven that no committee that does not contain $c'$ will win under the profile $(P_i, P_F, P_{\overline{F}})$, we will prove that no committee that contains $c'$ will win under the profile $(A_i, P_F, P_{\overline{F}})$.
	
	If there is such a committee, we divide it into two cases: either it contains $c''$ or it does not contain $c''$.
	
	If it does not contain $c''$, we start from the committee $W'$, which we have already seen is defeated by $W''$ under the profile $(A_i, P_F, P_{\overline{F}})$, and we swap candidates one by one. Since the committee we are examining contains $c'$ and does not contain $c''$, any removal of a candidate from $W'$ will be of a candidate from the set $\{c_1, \ldots, c_{k-1}\}$, and any candidate addition will be an addition of a candidate from the set $C \setminus \{c_1, \ldots, c_{k-1}, c'', c'\}$. We have seen that such a swap only decreases the score.
	
	If it does contain $c''$, we start from the committee $W^*$, which we have already seen is defeated by $W''$ under the profile $(A_i, P_F, P_{\overline{F}})$, and we swap candidates one by one. Since the committee we are examining contains $c'$ and also contains $c''$, any removal of a candidate from $W^*$ will be of a candidate from the set $\{c_1, \ldots, c_{k-1}\}$, and any candidate addition will be an addition of a candidate from the set $C \setminus \{c_1, \ldots, c_{k-1}, c'', c'\}$. We have seen that such a swap only decreases the score.
	
	(Note that $W^*$ is not a single committee, but there could be several such committees depending on who was the first we removed from among $\{c_1, \ldots, c_{k-1}\}$, but our earlier proof applied to each of them accordingly.)
	
	Thus, we have seen that $c' \notin \mathcal{R}(A_i, P_F, P_{\overline{F}})$ and $c' \in \mathcal{R}(P_i, P_F, P_{\overline{F}})$. Since we defined that $c' \in P_i$, we have essentially found a case where $\mathcal{R}(A_i, P_F, P_{\overline{F}}) \cap P_i \supseteq \mathcal{R}(P_i, P_F, P_{\overline{F}}) \cap P_i$ does not hold, meaning the manipulation is not safe.

\end{proof}

\bibliographystyle{plainnat}
\bibliography{references}

\end{document}